\newif\ifprint
\newcolumntype{.}{D{.}{.}{-1}}
\newcolumntype{d}[1]{D{.}{.}{#1}}
\newcolumntype{C}[1]{>{\centering\arraybackslash}p{#1}}
\newcolumntype{L}[1]{>{\raggedright\arraybackslash}p{#1}}
\newcolumntype{R}[1]{>{\raggedleft\arraybackslash}p{#1}}
\theoremstyle{plain}
\newtheorem{theorem}{Theorem}
\newtheorem{assumption}{Assumption}
\newtheorem{lemma}{Lemma}
\newcommand{\qed}{\hfill \ensuremath{\Box}}
\DeclarePairedDelimiterX{\ip}[2]{\langle}{\rangle}{#1,\, #2}
\DeclarePairedDelimiter\abs{\lvert}{\rvert}
\DeclarePairedDelimiter\norm{\lVert}{\rVert}
\newenvironment{proof}{\vspace{1ex}\noindent{\bf Proof}\hspace{0.5em}}
{\hfill\qed\vspace{1ex}}
\let\oldabs\abs
\def\abs{\@ifstar{\oldabs}{\oldabs*}}
\let\oldnorm\norm
\def\norm{\@ifstar{\oldnorm}{\oldnorm*}}
\begin{document}
	
	\newcommand\ud{\mathrm{d}}
	\newcommand\dist{\buildrel\rm d\over\sim}
	\newcommand\ind{\stackrel{\rm indep.}{\sim}}
	\newcommand\iid{\stackrel{\rm i.i.d.}{\sim}}
	\newcommand\logit{{\rm logit}}
	\renewcommand\r{\right}
	\renewcommand\l{\left}
	\newcommand\R{\mathbb{R}}
	\newcommand\cY{\mathcal{Y}}
	\newcommand\cZ{\mathcal{Z}}
	\newcommand\E{\mathbb{E}}
	\newcommand\V{\mathbb{V}}
	\newcommand\cA{\mathcal{A}}
	\newcommand\cC{\mathcal{C}}
	\newcommand\cD{\mathcal{D}}
	\newcommand\cE{\mathcal{E}}
	\newcommand\cM{\mathcal{M}}
	\newcommand\cN{\mathcal{N}}
	\newcommand\cT{\mathcal{T}}
	\newcommand\cTN{\mathcal{TN}}
	\newcommand{\Ip}{\mathcal{I}_+}
	\newcommand{\In}{\mathcal{I}_-}
	\newcommand\cX{\mathcal{X}}
	\newcommand\bA{\boldsymbol{A}}
	\newcommand\bC{\boldsymbol{C}}
	\newcommand\bI{\boldsymbol{I}}
	\newcommand\bK{\boldsymbol{K}}
	\newcommand\bP{\boldsymbol{P}}
	\newcommand\bQ{\boldsymbol{Q}}
	\newcommand\bU{\boldsymbol{U}}
	\newcommand\bD{\boldsymbol{D}}
	\newcommand\bS{\boldsymbol{S}}
	\newcommand\bX{\boldsymbol{X}}
	\newcommand\bW{\boldsymbol{W}}
	\newcommand\bM{\boldsymbol{M}}
	\newcommand\bZ{\boldsymbol{Z}}
	\newcommand\bY{\boldsymbol{Y}}
	\newcommand\balpha{\boldsymbol{\alpha}}
	\newcommand\bbeta{\boldsymbol{\beta}}
	\newcommand\bgamma{\boldsymbol{\gamma}}
	\newcommand\bdelta{\boldsymbol{\delta}}
	\newcommand\beps{\boldsymbol{\epsilon}}
	\newcommand\btheta{\boldsymbol{\theta}}
	\newcommand\bmu{\boldsymbol{\mu}}
	\newcommand\bxi{\boldsymbol{\xi}}
	\newcommand\bs{\boldsymbol{s}}
	\newcommand\bt{\boldsymbol{t}}
	\newcommand\bv{\boldsymbol{v}}
	\newcommand\bw{\boldsymbol{w}}
	\newcommand\bx{\boldsymbol{x}}
	\newcommand\by{\boldsymbol{y}}
	\newcommand\bone{\mathbf{1}}
	\newcommand\bzero{\mathbf{0}}
	\newcommand{\argmax}{\operatornamewithlimits{argmax}}
	\newcommand\ci{\perp\!\!\!\perp}
	
	\newcommand\spacingset[1]{\renewcommand{\baselinestretch}%
		{#1}\small\normalsize}
	
	\spacingset{1.1}
	
	
	
	{\title{Estimating Average Treatment Effects with Support
            Vector Machines\thanks{We thank Brian Lee for making his
              simulation code available and Francesca Dominici, Chad
              Hazlett, Gary King, Jose Zubizarreta and seminar
              participants at Institute for Quantitative Social
              Science, Harvard University for helpful discussions.  An
              anonymous reviewer of the Alexander and Diviya Magaro
              Peer Pre-Review Program for insightful comments.  Imai
              thanks the Sloan Foundation (\# 2020--13946) for
              financial support.}}
		
		\author{Alexander Tarr\thanks{PhD. Candidate, Department of Electrical Engineering, Princeton University, Princeton, NJ 08544. Email:\href{mailto:atarr@princeton.edu}{atarr@princeton.edu}} \and Kosuke Imai\thanks{Professor, Department of Government and
				Department of Statistics, Harvard University, Cambridge, MA
				02138. Phone: 617--384--6778, Email:
				\href{mailto:Imai@Harvard.Edu}{Imai@Harvard.Edu}, URL:
				\href{https://imai.fas.harvard.edu}{https://imai.fas.harvard.edu}}}
		
		\date{\today}
		
		\maketitle
		\thispagestyle{empty}
	}
	
	\begin{abstract}
		Support vector machine (SVM) is one of the most popular
		classification algorithms in the machine learning literature.  We
		demonstrate that SVM can be used to balance covariates and estimate
		average causal effects under the unconfoundedness assumption.
		Specifically, we adapt the SVM classifier as a kernel-based
		weighting procedure that minimizes the maximum mean discrepancy
		between the treatment and control groups while simultaneously
		maximizing effective sample size.  We also show that SVM is a
		continuous relaxation of the quadratic integer program for computing
		the largest balanced subset, establishing its direct relation to the
		cardinality matching method.  Another important feature of SVM is
		that the regularization parameter controls the trade-off between
		covariate balance and effective sample size.  As a result, the
		existing SVM path algorithm can be used to compute the
		balance-sample size frontier.  We characterize the bias of causal
		effect estimation arising from this trade-off, connecting the
		proposed SVM procedure to the existing kernel balancing methods.
		Finally, we conduct simulation and empirical studies to evaluate the
		performance of the proposed methodology and find that SVM is
		competitive with the state-of-the-art covariate balancing methods.
	\end{abstract}
	
	\noindent {\bf Keywords:} causal inference, covariate balance,
	matching, subset selection, weighting

	\clearpage
	\spacingset{1.83}
	\pagenumbering{arabic} 
	
\section{Introduction}

Estimating causal effects in an observational study is complicated by
the lack of randomization in treatment assignment, which may lead to
confounding bias. The standard approach is to weight observations such
that the empirical distribution of observed covariates is similar
between the treatment and control groups \citep[see
e.g.,][]{lunc:davi:04,rubi:06,ho:imai:king:stua:07,stua:10}.
Researchers then estimate the causal effects using the weighted sample
while assuming the absence of unobserved confounders.  Recently, a
large number of weighting methods have been proposed to directly
optimize covariate balance for causal effect estimation
\citep[e.g.,][]{hainmueller2012entropy,imai:ratk:14,zubizarreta2015stable,chan:yam:zhan:16,athe:imbe:wage:18,li:morg:zasl:18,
  wong2018kernel,zhao:19, hazlett2018kernel,
  kallus2020generalized,ning:peng:imai:20,tan:20}.

This paper provides a new insight to this fast growing literature on
covariate balancing by demonstrating that support vector machine
(SVM), which is one of the most popular classification algorithms in
machine learning \citep{cortes1995support,scholkopf2002learning}, can
be used to balance covariates and estimate the average treatment
effect under the standard unconfoundedness assumption.  We adapt the
dual coefficients for the soft-margin SVM classifier as kernel
balancing weights that minimizes the maximum mean discrepancy
\citep{gretton2007kernel} between the treatment and control groups
while simultaneously maximizing effective sample size.  The resulting
weights are bounded, leading to stable causal effect estimation.  As
SVM has been extensively studied and widely used, we can exploit its
well-known theoretical properties and highly optimized
implementations.

All matching and weighting methods face the same trade-off between
effective sample size and covariate balance with a better balance
typically leading to a smaller sample size.  We show that SVM directly
addresses this fundamental trade-off.  Specifically, the dual
optimization problem for SVM computes a set of balancing weights as
the dual coefficients while yielding the support vectors that comprise
a largest balanced subset.  In addition, the regularization parameter
of SVM controls the trade-off between sample size and covariate
balance.  We use existing path algorithms
\citep{hastie2004entire,sentelle2016simple} to efficiently
characterize the balance-sample size frontier \citep{king2017balance}.
We analyze how this trade-off between sample size and covariate
balance affect causal effect estimation.

We are not the first to realize the connection between SVM and
covariate balancing.  \cite{ratkovic2014balancing} points out that the
hinge-loss function of the primal optimization problem for SVM has a
first-order condition, which leads to balanced covariate {\it sums}
amongst the support vectors.  Instead, we show that the dual form of
the SVM optimization problem leads to the covariate {\it mean}
balance.  In addition, \cite{ghosh2018relaxed} notes the relationship
between the SVM margin and the region of covariate overlap. The author
argues that the support vectors correspond to observations lying in
the intersection of the convex hulls for the treated and control
samples \citep{king2006dangers}. In contrast, we show that the SVM
dual coefficients can be used to obtain weights for causal effect
estimation.  Furthermore, neither of these previous works studies the
relationship between the regularization parameter of SVM and the
trade-off between covariate balance and effective sample size.

The proposed methodology is also related to several covariate
balancing methods.  First, we establish that the SVM dual optimization
problem can be seen as a continuous relaxation of the quadratic
integer program for computing the largest balanced subset, which is
related to cardinality matching \citep{zubizarreta2014matching}.
Second, weighting with SVM can be interpreted as a kernel-based
covariate balancing method.  Several researchers have recently
developed weighting methods to balance functions in a reproducing
kernel Hilbert space (RKHS)
\citep{wong2018kernel,hazlett2018kernel,kallus2020generalized}.  SVM
shares the advantage of these methods that it can balance a general
class of functions and easily accommodate non-linearity and
non-additivity in the conditional expectation functions for the
outcomes.  In particular, we show that SVM fits into the kernel
optimal matching framework \citep{kallus2020generalized}. Unlike these
covariate balancing methods, however, we can exploit the existing path
algorithms for SVM to compute the set of solutions over the entire
regularization path with comparable complexity to computing a single
solution.  Finally, we also show that a variant of soft-margin SVM is
related to stable balancing weights \citep{zubizarreta2015stable}.

The rest of the paper is structured as follows.  In
Section~\ref{sec:methods}, we present our methodological results.  In
Section~\ref{sec:sims}, we conduct simulation studies to compare the
performance of SVM with that of the related covariate balancing
methods.  Lastly, in Section~\ref{sec:rhc}, we apply SVM to the data
from the right heart catheterization study
\citep{connors1996effectiveness}.


\section{Methodology}
\label{sec:methods}

In this section, we establish several properties of SVM as a covariate
balancing method.  First, the SVM dual can be viewed as a regularized
optimization problem that minimizes the maximum mean discrepancy
(MMD).  Second, we show that SVM is a relaxation of the largest
balanced subset selection problem.  Third, the regularization path
algorithm for SVM can be used to obtain the balance-sample size
frontier. Lastly, we discuss how to use SVM for causal effect
estimation and compare SVM to existing kernel balancing methods.

\subsection{Setup and Assumptions}

Suppose that we observe a simple random sample of $N$ units from a
super-population of interest, $\mathcal{P}$.  Denote the observed data
by $\mathcal{D} = \{\bX_i, Y_i, T_i\}_{i=1}^N$ where $\bX_i \in \cX$
represents a $p$-dimensional vector of covariates, $Y_i$ is the
outcome variable, and $T_i$ is a binary treatment assignment variable
that is equal to 1 if unit $i$ is treated and 0 otherwise.  We define
the index sets for the treatment and control groups as
$\mathcal{T} = \{i : T_i = 1\}$ and $\mathcal{C} = \{i : T_i = 0\}$
with the group sizes equal to $n_T = \abs{\mathcal{T}}$ and
$n_C = \abs{\mathcal{C}}$, respectively.  Finally, we define the
observed outcome as $Y_i = T_iY_i(1) + (1-T_i)Y_i(0)$ where $Y_i(1)$
and $Y_i(0)$ are the potential outcomes under treatment and control
conditions, respectively.

This notation implies the Stable Unit Treatment Value Assumption
(SUTVA) --- no interference between units and the same version of the
treatment \citep{rubi:90}.  Furthermore, we maintain the following
standard identification assumptions throughout this paper.
\begin{assumption}[Unconfoundedness]
	\label{assn:uncon} \spacingset{1} The potential outcomes
	$\{Y_i(1), Y_i(0)\}$ are indep\-endent of the treatment assignments
	$T_i$ conditional on the covariates $\bX_i$.  That is, for all $\bx
	\in \cX$, we have
	\begin{equation*}
		\{Y_i(1), Y_i(0)\} \ \perp\!\!\!\perp \ T_i \mid \bX_i  = \bx.
	\end{equation*}
\end{assumption}

\begin{assumption}[Overlap]
	\label{assn:over} \spacingset{1} For all $\bx \in \mathcal{X}$, the
	propensity score $e(\bx)=\Pr(T_i = 1 \mid \bX_i = \bx)$ is bounded
	away from 0 and 1, i.e.,
	\begin{equation*}
		0 < e(\bx) < 1.
	\end{equation*} 
\end{assumption}

Following the convention of classification methods, it is convenient
to define the following transformed treatment variable, which is equal
to either $-1$ or $1$,
\begin{equation}
	W_i \ = \ 2 T_i - 1 \ \in \ \{-1, 1\}.
\end{equation}
In addition, for $t = 0,1$, we define the conditional
expectation functions, disturbances, and conditional variance functions
as,
\begin{equation*}
	\E(Y_i(t) \mid \bX_i) = f_t(\bX_i), \quad \epsilon_i(t) = Y_i(t) - f_t(\bX_i), \quad \sigma_{t}^2(\bX_i) = \V(Y_i(t) \mid \bX_i).
\end{equation*}
Thus, we have $\E(\epsilon_i(t) \mid \bX_i) = 0$ for $t=0,1$. Lastly,
let $\mathcal{H}_K$ denote a reproducing kernel Hilbert space (RKHS)
with norm $\norm{\cdot}_{\mathcal{H}_K}$ and kernel
$K(\bX_i,\bX_j) = \ip*{\phi(\bX_i)}{\phi(\bX_j)}_{\mathcal{H}_K}$,
where $\phi: \R^p \mapsto \mathcal{H}_K$ is a feature mapping of the
covariates to the RKHS.

\subsection{Support Vector Machines}
\label{subsec:svm}

Support vector machines (SVMs) are a widely-used methodology for
two-class classification problems
\citep{cortes1995support,scholkopf2002learning}.  SVM aims to compute
a separating hyperplane of the form,
\begin{equation}
	\label{eq:hyperplane}
	f(\bX_i) \ = \ \bbeta^\top \phi(\bX_i)  + \beta_0,
\end{equation}
where $\bbeta \in \mathcal{H}_K$ is the normal vector for the
hyperplane and $\beta_0$ is the offset, and classification is done based
on which side of the hyperplane $\bX_i$ lies on.

In this paper, we use SVM for the classification of treatment status.
For non-separable data, $\bbeta$ and $\beta_0$ are computed according
to the following soft-margin SVM problem,
\begin{mini}[2]<b>
	{\bbeta,\beta_0,\bxi}{\frac{\lambda}{2}\norm*{\bbeta}_{\mathcal{H}_K}^2 + \sum_{i=1}^{N}\xi_i}{}{}
	\addConstraint{W_i f(\bX_i)}{\geq 1 - \xi_i,\quad}{i = 1,\dots, N \labelOP{eq:svm_primal}}
	\addConstraint{\xi_i}{\geq 0,}{i = 1,\dots, N,}
\end{mini}
where $\{\xi_i\}_{i=1}^N$ are the so-called slack variables, and
$\lambda$ is a regularization parameter controlling the trade-off
between the margin width of the hyperplane and margin violation of the
samples.  Note that $\lambda$ is related to the traditional SVM cost
parameter $C$ via $\lambda = 1/C$.

In cases where $\bbeta$ is high-dimensional, the dual form of
Eqn~\eqref{eq:svm_primal} is often preferable to work with. Defining
the matrix $\bQ$ with elements $\bQ_{ij} = W_iW_jK(\bX_i,\bX_j)$ and
the vector $\bW$ with elements $W_i$, the dual form is given by,
\begin{mini}[2]
	{\balpha}{\frac{1}{2\lambda} \balpha^\top\bQ\balpha - \bone^\top\balpha}{\label{eq:svm_dual}}{}
	\addConstraint{\bW^\top\balpha}{=0}{}
	\addConstraint{0 \preceq \balpha}{\preceq 1,}{}
\end{mini}
where $\balpha$ are called the dual coefficients, $\bone$ represents a vector of ones, 
and $\preceq$ denotes an element-wise inequality.

We begin by providing an intuitive explanation of how the SVM dual
given in Eqn~\eqref{eq:svm_dual} can be viewed as a covariate
balancing procedure.  First, note that the quadratic term in the above
dual objective function can be written as a weighted measure of
covariate discrepancy between the treatment and control groups,
\begin{equation}
	\label{eq:cov_discrep}
	\balpha^\top \bQ\balpha \ = \ \norm*{\sum_{i \in \mathcal{T}} \alpha_i \phi(\bX_i) - \sum_{i \in \mathcal{C}} \alpha_i \phi(\bX_i)}^2,
\end{equation}
while the constraint $\bW^\top \balpha = 0$ ensures that the sum of
weights is identical between the treatment and control groups,
\begin{equation*}
	\bW^\top \balpha = 0 \Longleftrightarrow \sum_{i \in \mathcal{T}} \alpha_i  = \sum_{i \in \mathcal{C}} \alpha_i.
\end{equation*}

Lastly, the second term in the objective, $\bone^\top \balpha$, is
proportional to the sum of weights for each treatment group, since the
above constraint implies
$\sum_{i \in \mathcal{T}} \alpha_i = \sum_{i \in \mathcal{C}} \alpha_i
= \bone^\top \balpha / 2$.  Thus, SVM simultaneously minimizes the
covariate discrepancy and maximizes the effective sample size, which
in turn leads to minimization of the weighted difference-in-means in
the transformed covariate space.  In addition, the weights are bounded
as represented by the constraint $0 \le \alpha_i \le 1$ for all $i$,
leading to stable causal effect estimation \citep{tan:10}.

The choice of kernel function $K(\bX_i, \bX_j)$ and its corresponding
feature map $\phi$ determine the type of covariate balance enforced by
SVM, as shown in Eqn~\eqref{eq:cov_discrep}. In this paper, we focus
on the linear, polynomial, and radial basis function (RBF)
kernels. The linear kernel $K(\bX_i, \bX_j) = \bX_i^\top \bX_j$
corresponds to a feature map $\phi(\bX_i) = \bX_i$, and hence the
quadratic term $\balpha^\top \bQ\balpha$ measures the discrepancy in
the original covariates. The general form for the degree $d$
polynomial kernel with scale parameter $c$ is
$K(\bX_i, \bX_j) = \left(\bX_i^\top \bX_j + c \right)^d$.
For example, the quadratic kernel with $d=2$ leads to a discrepancy
measure of the original covariates, their squares, and all pairwise
interactions.
The final kernel considered in this paper is the RBF kernel with scale
parameter $\gamma$:
$K(\bX_i, \bX_j) = \exp \left(- \gamma \norm{\bX_i - \bX_j}^2
\right)$.  This kernel can be viewed as a generalization of the
polynomial kernel in the limit $d \to \infty$.

In addition, 
the Karush–Kuhn–Tucker (KKT)
conditions for soft-margin SVM lead to the following useful
characterization for a solution $\balpha$:
\begin{align}
	\label{eq:kkt}
	W_i f(\bX_i) &= 1 \implies 0 \leq \alpha_i \leq 1\nonumber \\
	W_i f(\bX_i) &< 1 \implies \alpha_i = 1 \\
	W_i f(\bX_i) &> 1 \implies \alpha_i = 0 .\nonumber
\end{align}
The set of units that satisfy $W_i f(\bX_i) > 1$ are easy to classify
and receive zero weight. These units occur in regions of little or no
overlap and are most difficult to balance. The set of units that
satisfy $W_i f(\bX_i) = 1$ are referred to as marginal support
vectors, while those that meet $W_i f(\bX_i) < 1$ are the non-marginal
support vectors.  Collectively, these two sets correspond to the units
that the optimal hyperplane has the most difficulty classifying. As we
see in the next section, the support vectors comprise a balanced
subset.

In sum, the SVM dual problem finds the bounded weights that minimize
the covariate discrepancy between the treatment and control groups
while simultaneously maximizing effective sample size.  The
regularization parameter $\lambda$ controls which of these two
components receives more emphasis.  SVM chooses optimal weights such
that difficult-to-balance units are given zero weight.  Given this
intuition, we now establish more formally a connection between SVM,
covariate balancing, and causal effect estimation.


\subsection{SVM as a Maximum Mean Discrepancy Minimizer}
\label{subsec:mmd}

We now show that SVM minimizes the maximum mean discrepancy (MMD) of
covariate distribution between the treatment and control groups.  The
MMD is a commonly used measure of distance between probability
distributions \citep{gretton2007kernel} that was recently proposed as
a metric for balance assessment in causal inference
\citep{zhu2018kernel}.  Specifically, we show that the SVM dual
problem given in Eqn~\eqref{eq:svm_dual} can be viewed as a
regularized optimization problem for computing weights which minimize
the MMD.


The MMD, which is also called the kernel distance, is a measure of
distance between two probability distributions based on the difference
in mean function values for functions in the unit ball of a RKHS.  The
MMD has found use in several statistical applications
\citep{gretton2007kernel_b,gretton2012kernel,sriperumbudur2011mixture}. Given
the unit ball RKHS
$\mathcal{F}_K = \{f \in \mathcal{H}_K : \norm{f}_{\mathcal{H}_K} \leq
1 \}$ and two probability measures $F$ and $G$, the MMD is defined as
\begin{equation}
	\label{eq:mmd}
	\gamma_K(F, G) \coloneqq \sup_{f \in \mathcal{F}_K} ~\abs*{\int f dF - \int f dG}.
\end{equation}
An important property of the MMD is that when $K$ is a characteristic
kernel (e.g., the Gaussian radial basis function kernel and Laplace
kernel), then $\gamma_K(F, G) = 0$ if and only if $F = G$
\citep{sriperumbudur2010hilbert}.

The computation of $\gamma_K(F, G)$ requires the knowledge of both $F$
and $G$, which is typically unavailable. In practice, an estimate of
$\gamma_K(F, G)$ using the empirical distributions $\widehat{F}_m$ and
$\widehat{G}_n$ can be computed as
\begin{equation}
	\label{eq:mmd_emp}
	\gamma_K(\widehat{F}_m, \widehat{G}_n) = \norm*{\frac{1}{m} \sum_{i: \bX_i \sim F} \phi(\bX_i) - \frac{1}{n} \sum_{j: \bX_j \sim G} \phi(\bX_j)}_{\mathcal{H}_K},
\end{equation}
where $m$ and $n$ are the size of the samples drawn from $F$ and $G$,
respectively. The properties of this statistic are well-studied
\citep[see e.g.,][]{sriperumbudur2012empirical}.  In causal inference,
the empirical MMD can be used to assess balance between the treated
and control samples \citep{zhu2018kernel}.  This is done by setting
$F = P(\bX_i \mid T_i = 1)$ and $G = P(\bX_i \mid T_i = 0)$.  Then,
the quantity $\gamma_K(\widehat{F}_m, \widehat{G}_n)$ gives a measure
of independence between the treatment assignment $T_i$ and the
observed pre-treatment covariates $\bX_i$.

Eqn~\eqref{eq:mmd_emp} suggests a weighting procedure that balances
the covariate distributions between the treatment and control groups
by minimizing the empirical MMD.  We define a weighted variant of the
empirical MMD as
\begin{equation}
	\label{eq:mmd_weighted}
	\gamma_K(\widehat{F}_{\balpha}, \widehat{G}_{\balpha}) \ = \
	\norm*{\sum_{i \in \mathcal{T}} \alpha_i \phi(\bX_i) - \sum_{j
			\in \mathcal{C}} \alpha_j \phi(\bX_j)} \ = \ \sqrt{\balpha^\top\bQ\balpha},
\end{equation}
where $\widehat{F}_{\balpha}$ and $\widehat{G}_{\balpha}$ denote the
reweighted empirical distributions under weights $\balpha$, which are
restricted to the simplex set,
\begin{equation}
	\mathcal{A}_{\text{simplex}} \ = \ \l\{\balpha \in \R^N : 0 \preceq \balpha
	\preceq 1, ~\sum_{i \in \mathcal{T}} \alpha_i = \sum_{j \in
		\mathcal{C}} \alpha_j = 1\r\}.
\end{equation}
The optimization problem for minimizing the empirical MMD is therefore
formulated as,
\begin{mini}[2]
	{\balpha}{\sqrt{\balpha^\top \bQ \balpha}}{\label{eq:mmd_opt_orig}}{}
	\addConstraint{\balpha}{\in \cA_{\text{simplex}}.}{}
\end{mini}

Computing weights according to this problem is generally not
preferable due to the lack of regularization, which leads to
overfitting and sparse $\balpha$, resulting in many discarded
samples. The following theorem establishes that the SVM dual problem
can be viewed as a regularized version of the optimization problem in
Eqn~\eqref{eq:mmd_opt_orig}.
\begin{theorem}[SVM Dual Problem as Regularized MMD Minimization]
	\label{thm:svm_mmd_equiv} \spacingset{1}
	Let $\balpha_*(\lambda)$ denote the solution to the SVM dual
	problem under $\lambda$, defined in
	Eqn~\eqref{eq:svm_dual}.  Consider the normalized weights
	$\widetilde{\balpha}_*(\lambda) = 2\balpha_*(\lambda) /
	\bone^\top\balpha_*(\lambda)$ such that
	$\widetilde{\balpha}_*(\lambda) \in
	\cA_{\text{simplex}}$. Then,
	\begin{enumerate}
		\item[(i)] There exists $\lambda_*$ such that
		$\widetilde{\balpha}_*(\lambda_*)$ is a solution to the MMD
		minimization problem, defined in
		Eqn~\eqref{eq:mmd_opt_orig}.
		\item[(ii)] The quantity
		$\widetilde{\balpha}_*(\lambda)^\top \bQ
		\widetilde{\balpha}_*(\lambda)$ is a monotonically
		increasing function of $\lambda$.
	\end{enumerate} 
\end{theorem}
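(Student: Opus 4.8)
My plan is to treat the SVM dual as a parametric quadratic program in $1/\lambda$ and read off both claims from the structure of its solution path. The key normalization is that $\widetilde{\balpha}_*(\lambda) = 2\balpha_*(\lambda)/\bone^\top\balpha_*(\lambda)$ lies in $\cA_{\text{simplex}}$ by construction (the equality constraint $\bW^\top\balpha = 0$ forces $\sum_{\cT}\alpha_i = \sum_{\cC}\alpha_i = \bone^\top\balpha/2$, and the upper bound $\balpha\preceq 1$ is what I will need to check survives rescaling — see the obstacle below). Writing $s(\lambda) = \bone^\top\balpha_*(\lambda)$ for the (half-)effective-sample-size and $Q(\alpha) = \alpha^\top\bQ\alpha$, the dual objective is $\tfrac{1}{2\lambda}Q(\balpha) - s$. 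I would first establish the monotonicity of the quadratic form evaluated at the normalized weights, i.e. part (ii), because part (i) will essentially fall out of a limiting/extremal argument applied to the same family.

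For part (ii): let $0 < \lambda_1 < \lambda_2$ and abbreviate $\balpha_k = \balpha_*(\lambda_k)$, $Q_k = Q(\balpha_k)$, $s_k = s(\balpha_k)$. Optimality of $\balpha_k$ for its own $\lambda_k$ gives the two inequalities $\tfrac{1}{2\lambda_1}Q_1 - s_1 \le \tfrac{1}{2\lambda_1}Q_2 - s_2$ and $\tfrac{1}{2\lambda_2}Q_2 - s_2 \le \tfrac{1}{2\lambda_2}Q_1 - s_1$. Adding them yields $\bigl(\tfrac{1}{2\lambda_1} - \tfrac{1}{2\lambda_2}\bigr)(Q_1 - Q_2) \le 0$, and since $\lambda_1 < \lambda_2$ the prefactor is positive, so $Q_1 \le Q_2$: the \emph{unnormalized} quadratic form $Q(\balpha_*(\lambda))$ is nondecreasing in $\lambda$. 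To pass to the normalized weights, note $\widetilde{\balpha}_*(\lambda)^\top\bQ\widetilde{\balpha}_*(\lambda) = 4\,Q(\balpha_*(\lambda))/s(\lambda)^2$, so I also need $s(\lambda)^2$ to behave compatibly. Here I would use a second consequence of the same two optimality inequalities: subtracting them (rather than adding) isolates the $s$-terms and shows $s(\lambda)$ is nonincreasing in $\lambda$ — intuitively, larger $\lambda$ penalizes the margin less, shrinks $\balpha$, and reduces effective sample size. Then $Q(\balpha_*(\lambda))$ nondecreasing and $s(\lambda)$ nonincreasing together give $4Q/s^2$ nondecreasing, which is exactly (ii). I should double-check the edge case where $s(\lambda) = 0$ is impossible under Assumption~\ref{assn:over} (overlap guarantees both groups are nonempty, so the all-zeros $\balpha$ is never optimal for finite $\lambda$ since moving a tiny mass in decreases $-\bone^\top\balpha$ at second order in $Q$).

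For part (i): the MMD problem Eqn~\eqref{eq:mmd_opt_orig} minimizes $\sqrt{Q(\balpha)}$ over the compact set $\cA_{\text{simplex}}$, so a minimizer $\balpha^\dagger$ exists; let $q_{\min} = Q(\balpha^\dagger)$. The family $\{\widetilde{\balpha}_*(\lambda)\}_{\lambda > 0}$ traces a curve in $\cA_{\text{simplex}}$, and by (ii) the value $Q(\widetilde{\balpha}_*(\lambda))$ is monotone in $\lambda$ with some infimum as $\lambda \to 0^+$. The plan is to show this infimum equals $q_{\min}$ and is attained at a finite $\lambda_*$. As $\lambda \to 0^+$, the $\tfrac{1}{2\lambda}Q$ term dominates the SVM dual objective, so $\balpha_*(\lambda)$ is driven toward the set of minimizers of $Q$ subject only to the box and equality constraints; after normalization this is precisely minimizing $Q$ over $\cA_{\text{simplex}}$, giving $\lim_{\lambda\to 0^+} Q(\widetilde{\balpha}_*(\lambda)) = q_{\min}$. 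Finiteness of the attaining $\lambda_*$ follows because the SVM solution path is piecewise linear in $1/\lambda$ with finitely many breakpoints (the standard SVM regularization-path fact, e.g. \citet{hastie2004entire}), so the normalized quadratic form is constant on the final path segment $(0,\lambda_*]$ and already equals $q_{\min}$ there — any $\lambda_*$ on that segment works.

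\textbf{Main obstacle.} The delicate point is the normalization constraint $\widetilde{\balpha}_*(\lambda)\preceq 1$: rescaling $\balpha_*(\lambda)$ by $2/s(\lambda)$ can push entries above $1$ when $s(\lambda) < 2$, i.e. when the effective sample size is small (large $\lambda$). So the clean statement "$\widetilde{\balpha}_*(\lambda)\in\cA_{\text{simplex}}$ for all $\lambda$" is really only needed at the relevant $\lambda_*$ (small $\lambda$, large $s$), and I expect the theorem is implicitly using $\lambda$ in the regime where $s(\lambda)\ge 2$ — equivalently, at least two marginal support vectors per group carry weight $1$, which holds for small $\lambda$. I would handle this by restricting attention to $\lambda \le \bar\lambda$ where $s(\bar\lambda) = 2$ (such $\bar\lambda$ exists by continuity and monotonicity of $s$, provided the fully-balanced regime is nonempty), stating (i)–(ii) on that range, and noting $\lambda_*$ from the path argument lies safely inside it. The other mild technical check is semicontinuity/uniqueness of the arg-min as $\lambda\to 0^+$ when $\bQ$ is only positive semidefinite (non-characteristic or collinear kernels); there I would argue with the value $Q(\widetilde{\balpha}_*(\lambda))$ directly rather than with the (possibly non-unique) minimizer, which the monotonicity in (ii) already lets me do.
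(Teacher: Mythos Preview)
Your argument for part (ii) has a sign error that breaks the whole thing. Rerun your own comparison inequalities: from $\tfrac{1}{2\lambda_2}Q_2 - s_2 \le \tfrac{1}{2\lambda_2}Q_1 - s_1$ together with $Q_1 \le Q_2$ you get $s_1 - s_2 \le \tfrac{1}{2\lambda_2}(Q_1 - Q_2) \le 0$, so $s_1 \le s_2$. That is, $s(\lambda)$ is \emph{nondecreasing} in $\lambda$, not nonincreasing. (This matches the paper's description of the path: at $\lambda_{\max}$ the weight sum is the largest, $2\min(n_T,n_C)$, and it shrinks as $\lambda$ decreases.) Your intuition is backwards: larger $\lambda$ means a smaller coefficient $1/(2\lambda)$ on the quadratic term, so the linear term $-\bone^\top\balpha$ dominates and pushes $\balpha$ \emph{up}, not down. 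With both $Q(\balpha_*(\lambda))$ and $s(\lambda)$ nondecreasing, you cannot conclude that $4Q/s^2$ is monotone from the two separate facts; the ratio requires a different argument. The paper handles this by reformulating the SVM dual as a parametric family associated with the fractional program $\min_{\balpha\in\cA_{\text{SVM}}}\sqrt{\balpha^\top\bQ\balpha}/(\bone^\top\balpha/2)$ via Dinkelbach's transform, and then invokes a monotonicity result for such parametric fractional programs \citep[Lemma~3]{schaible1976fractional}. That machinery is what does the work you are trying to do by hand.

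Two smaller points. First, your ``main obstacle'' about $\widetilde{\balpha}_*(\lambda)\preceq 1$ is a phantom: once $\balpha\succeq 0$ and $\sum_{i\in\cT}\alpha_i = \sum_{i\in\cC}\alpha_i = 1$, each coordinate is automatically $\le 1$, so the upper bound in $\cA_{\text{simplex}}$ is redundant and normalization never violates it. Second, your limiting argument for part (i) is on shaky ground because, as $\lambda\to 0^+$, $\balpha_*(\lambda)\to 0$ (the unconstrained minimizer of $Q$ over $\cA_{\text{SVM}}$ is the origin), so $\widetilde{\balpha}_*(\lambda)$ is a $0/0$ limit whose existence and identification with the MMD minimizer need justification --- this is exactly what Dinkelbach's theorem provides by pinning down a \emph{finite} $\nu_* = q_*/2$ (hence a finite $\lambda_*$) at which the SVM dual solution already solves the fractional program and, after normalization, the MMD problem.
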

Proof is given in Appendix~\ref{app:thm:svm_mmd_equiv}.
Theorem~\ref{thm:svm_mmd_equiv} shows that the regularization
parameter $\lambda$ controls the trade-off between the covariate
imbalance, measured as the MMD, and the effective sample size,
measured as the sum of the support vector weights
$\bone^\top \balpha$.  Thus, a greater size of support vector set may
lead to a worse covariate balance within that set.



\subsection{SVM as a Relaxation of the Largest Balanced Subset Selection}

SVM can also be seen as a continuous relaxation of the quadratic
integer program (QIP) for computing the largest balanced subset.  We
modify the optimization problem in Eqn~\eqref{eq:svm_dual} by
replacing the continuous constraint $0 \preceq \balpha \preceq 1$ with
the binary integer constraint $\balpha \in \{0,1\}^N$. This problem,
which we refer to as SVM-QIP, is given by:
\begin{mini}[2]
  {\balpha}{\frac{1}{2\lambda}\balpha^\top\bQ\balpha - \bone^\top \balpha}{\label{eq:largest_subset}}{}
  \addConstraint{\bW^\top\balpha}{=0}{}
  \addConstraint{\balpha}{\in \{0,1\}^N,}{}
\end{mini}

Interpreting the variables $\alpha_i$ as indicators of whether or not
unit $i$ is selected into the optimal subset, we see that the
objective is a trade-off between subset balance in the projected
features (first term) and subset size (second term), where balance is
measured by a difference in sums.  However, the constraint
$\bW^\top \balpha = 0$ requires the optimal subset to have an equal
number of treated and control units, so balancing the feature sums
also implies balancing the feature means.

Thus, the SVM dual in Eqn~\eqref{eq:svm_dual} is a continuous
relaxation of the largest balanced subset problem represented by
SVM-QIP in Eqn~\eqref{eq:largest_subset}, with the set of support
vectors comprising an approximation to the largest balanced subset, as
these are the units for which $\alpha_i > 0$. The quality of the
approximation is difficult to characterize generally since differences
between the two computed subsets are influenced by a number of
factors, including separability in the data, the value of $\lambda$,
and the choice of kernel.  However, in our own experiments, we observe
significant overlap between the two solutions.  This result suggests
that SVM uses non-integer weights to augment the SVM-QIP solution
without compromising balance in the selected subset.  This highlights
the advantages of weighting over matching.

In Appendix~\ref{app:othermethods}, we further motivate SVM as a
balancing procedure by connecting it to two existing covariate
balancing methods, cardinality matching
\citep{zubizarreta2014matching} and stable balancing weights
\citep{zubizarreta2015stable}.  Specifically, we show that SVM can be
viewed as a weighting analog to cardinality matching.  In addition, a
variant of the SVM problem defined in Eqn~\eqref{eq:svm_primal},
called L2-SVM, forms an optimization problem that simultaneously
minimizes balance and weight dispersion.  SVM differs from these
methods primarily in the way that balance is enforced.

\subsection{Regularization Path as a Balance-Sample Size Frontier}
\label{subsec:path_alg}

Another important advantage to using SVM to perform covariate
balancing is the existence of path algorithms, which can efficiently
compute the set of solutions to Eqn~\eqref{eq:svm_dual} over different
values of $\lambda$. Since Theorem~\ref{thm:svm_mmd_equiv} establishes
that $\lambda$ controls the trade-off between the MMD and subset size,
the path algorithm for SVM can be viewed as the weighting analog to
the balance-sample size frontier \citep{king2017balance}.  Below, we
briefly discuss the path algorithm and explain how the path can be
interpreted as a balance-sample size frontier.

Path algorithms for SVM were first proposed by
\cite{hastie2004entire}, who showed that the weights $\balpha$ and
scaled intercept $\alpha_0 \coloneqq \lambda \beta_0$ are piecewise
linear in $\lambda$ and presented an algorithm for computing the
entire path of solutions with a comparable computation cost to finding
a single solution. However, their algorithm was prone to numerical
problems and would fail in the presence of singular submatrices of
$\bQ$. Recent work on SVM path algorithms has addressed these
issues. In our analysis, we use the path algorithm presented in
\cite{sentelle2016simple}, which we briefly describe here.

The regularization path for SVM is characterized
by a sequence of breakpoints, representing the values of $\lambda$ at
which either one of the support vectors on the margin $W_if(X_i) = 1$
exits the margin, or a non-marginal observation reaches the margin.
Between these breakpoints, the coefficients of the marginal support
vectors $\alpha_i$ change linearly in $\lambda$, while the
coefficients of all other observations stay fixed as $\lambda$ is
changed. Since the KKT conditions must be met for any solution
$\balpha$, we can form a linear system of equations to compute how each
$\alpha_i$ and $\alpha_0$ changes with respect to $\lambda$.

Based on this idea, beginning with an initial solution corresponding
to some large initial value of $\lambda$, the path algorithm first
computes how the current marginal support vectors change with respect
to $\lambda$. Given this quantity, the next breakpoint in the path is
computed by decreasing $\lambda$ until a marginal support vector exits
the margin, i.e, $\alpha_i = 0$ or $\alpha_i = 1$, or a non-marginal
observation enters the margin, i.e., $W_i f(X_i) = 1$. At this point,
the marginal support vector set is updated, and the changes in
$\alpha_i$ and $\alpha_0$, as well as the next breakpoint, are
computed. This procedure repeats until the terminal value of $\lambda$
is reached.

The initial solution in the SVM regularization path corresponds to the
solution at $\lambda_{\text{max}}$ such that for any
$\lambda > \lambda_{\text{max}}$, the minimizing weight vector
$\balpha$ does not change. We assume without a loss of generality that
$n_T \leq n_C$. Then initially, $\alpha_i = 1$ for all
$i \in \mathcal{T}$, and the remaining weights are computed according
to,
\begin{argmini}[2]<b>
	{\balpha}{\balpha^\top\bQ\balpha}{}{}
	\addConstraint{\sum_{i \in \mathcal{C}} \alpha_i}{= n_T\labelOP{eq:path_init}}{}
	\addConstraint{\alpha_i}{= 1,}{i \in \cT}
	\addConstraint{\balpha}{\in [0,1]^N.}
\end{argmini}
The initial solution computes control weights to minimize the weighted
empirical MMD while fixing the renormalized weights
$\widetilde{\alpha}_i = n_T^{-1}$ for the treated units. This
corresponds to the largest subset, as measured by $\sum_i \alpha_i$,
amongst all solutions on the regularization path.

The regularization path completes when the resulting solution has no
non-marginal support vectors, or in the case of non-separable data,
when $\lambda = 0$. In practice, however, path algorithms run into
numerical issues when $\lambda$ is small, so we terminate the path at
$\lambda_{\text{min}} = 1\times 10^{-3}$, which appears to work well
in our experiments. This value is often greater than $\lambda^*$,
which corresponds to the MMD-minimizing solution defined in
Theorem~\ref{thm:svm_mmd_equiv}.  In our experience, however, the
differences in balance between these two solutions are negligible.


Summarizing the regularization path, we see that the initial solution
at $\lambda_{\text{max}}$ has the largest weight sum
$\textstyle\sum_{i=1}^{N} \alpha_i = 2\min\{n_T,n_C\}$ and can be
viewed as the largest balanced subset retaining all observations in
the minority class. As we move through the path, the SVM dual problem
imposes greater restrictions on balance in the subset, which leads to
smaller subsets with better balance, until we reach the terminal value
$\lambda_{\text{min}}$, at which the weighted empirical MMD is
smallest on the path.

\subsection{Causal Effect Estimation}
\label{subsec:effect_est}

Theorem~\ref{thm:svm_mmd_equiv} establishes that the SVM dual problem
can be viewed as a regularized optimization problem for computing
balancing weights to minimize the MMD. However, achieving a high
degree of balance often requires significant pruning of the original
sample, especially in scenarios where the covariate distributions for
the treated and control groups have limited overlap. In this section,
we provide a characterization of this trade-off between subset size
and subset balance and discuss its impact on the bias of causal effect
estimates.

Recent work by \cite{kallus2020generalized} established that several
existing matching and weighting methods are minimizing the dual norm
of the bias for a weighted estimator, a property called error dual
norm minimizing. The author proposes a new method, kernel optimal
matching (KOM), which minimizes the dual norm of the bias when the
conditional expectation functions are embedded in an RKHS. The idea is
that the RKHS encapsulates a general class of functions that can
accommodate non-linearity and non-additivity in the conditional
expectation functions, leading to balancing weights which are robust
to model misspecification.  Below, we show that SVM also fits into
this KOM framework.

We restrict our attention to the following weighted
difference-in-means estimator,
\begin{equation}
	\label{eq:ate_est}
	\widehat{\tau} \ = \ \sum_{i \in \mathcal{T}} \alpha_i Y_i - \sum_{i \in \mathcal{C}} \alpha_i Y_i,
\end{equation}
where $\balpha \in \cA_{\text{simplex}}$ is computed via the
application of SVM to the data $\{\bX_i, T_i\}_{i=1}^N$.  Below, we
derive the form for the conditional bias with respect to two
estimands, the sample average treatment effect (SATE),
$\tau_{\text{SATE}}$, and the sample average treatment effect for the
treated (SATT), $\tau_{\text{SATT}}$.  We then discuss how to compute
this bias when the conditional expectation functions $f_0$ and $f_1$
are unknown.

As shown in Appendix~\ref{app:bias}, under
Assumptions~\ref{assn:uncon}~and~\ref{assn:over}, the conditional bias
with respect to $\tau_{\text{SATE}}$ and $\tau_{\text{SATT}}$ for the
estimator above is given by,
\begin{equation}
	\label{eq:bias_cate}
	\E(\widehat{\tau} - \tau \mid \{\bX_i, T_i\}_{i=1}^N) \ = \ \sum_{i=1}^{N} \alpha_i W_i f_{0}(\bX_i) + \sum_{i=1}^N \left(\alpha_iT_i - v_i\right) \tau(\bX_i),
\end{equation}
where
\begin{equation}
	v_i \ = \ 
	\begin{cases}
		1/N &\text{if } \tau = \tau_{\text{SATE}} \\
		T_i /n_T &\text{if } \tau = \tau_{\text{SATT}}
	\end{cases},
\end{equation}      
and
$\tau(\bX_i) \coloneqq \E(Y_i(1) - Y_i(0) \mid \bX_i) = f_1(\bX_i) -
f_0(\bX_i)$. 

The first term in Eqn~\eqref{eq:bias_cate} represents the bias due to
the imbalance of prognostic score \citep{hansen2008prognostic}.
Unfortunately, computation of this quantity is difficult since $f_0$
is typically unknown.  However, we can embed $f_0$ in a unit-ball
RKHS, $\mathcal{F}_K$, and consider an $f_0$ that maximizes the square
of this bias term.  As shown in Appendix~\ref{app:worst-case}, this
strategy leads to the following optimization problem that is of the
same form as that of the MMD minimization problem given in
Eqn~\eqref{eq:mmd_opt_orig},
\begin{mini}
	{\balpha}{\gamma_{K}^2\left(\widehat{F}_{\balpha}, \widehat{G}_{\balpha} \right)}{\label{eq:wc_bias_conf}}{}
	\addConstraint{\balpha}{\in \cA_{simplex}.}{}
\end{mini}
Thus, the SVM dual problem can also be viewed as a method for minimizing bias due to
prognostic imbalance.

However, SVM does not address the second term of the conditional bias
in Eqn~\eqref{eq:bias_cate}.  This term represents the bias due to
extrapolation outside of the weighted treatment group to the
population of interest.
For example, if the SATT is the target quantity, the term represents
the bias due to the difference between the weighted and unweighted
CATE for the treatment group.  Thus, the prognostic balance achieved
by SVM may induce this CATE bias.  This is a direct consequence of the
trade-off between balance and effective sample size, which is
controlled by the regularization parameter as shown earlier.


\subsection{Relation to Kernel Balancing Methods}
\label{subsec:kbal}

We now discuss the relations between SVM and a class of closely
related methods, known as kernel balancing \citep{wong2018kernel,
  hazlett2018kernel,kallus2021more}.  Consider the following
alternative decomposition of conditional bias,
\begin{equation}
	\label{eq:bias_general}
	\E(\widehat{\tau} - \tau \mid \{\bX_i, T_i\}_{i=1}^N) \ = \ \sum_{i=1}^N
	\left(\alpha_iT_i - v_i\right) f_1(\bX_i) + \sum_{i=1}^{N} \left\{v_i
	- \alpha_i(1-T_i)\right\} f_{0}(\bX_i).
\end{equation}
To minimize this bias, kernel balancing methods restrict $f_0$ and
$f_1$ to a RKHS
$\mathcal{F}_k \coloneqq \{(f_0, f_1) \in \mathcal{H}_{K_0} \times \mathcal{H}_{K_1} : \sqrt{\norm{f_0}_{\mathcal{H}_{K_0}}^2 + \norm{f_1}_{\mathcal{H}_{K_1}}^2} \leq c\}$
for some $c$ and consider minimizing the largest bias under the pair
$(f_0, f_1) \in \mathcal{F}_K$.  This problem is given by,
\begin{mini}
	{\balpha}{\sup_{f_0,f_1 \in \mathcal{F}_K} ~~ \left[\sum_{i=1}^N \left(\alpha_iT_i - v_i\right) f_1(\bX_i) - \sum_{i=1}^{N} \left\{\alpha_i(1-T_i) - v_i\right\} f_{0}(\bX_i) \right]^2}{\label{eq:wc_bias_min}}{}
	\addConstraint{\balpha}{\in \cA,}{}
\end{mini} 
where $\cA$ denotes the constraints on the weights.  For example,
\citet{kallus2021more} restricts them to $\cA_{\text{simplex}}$
whereas \citet{wong2018kernel} essentially uses
$\alpha_i \geq N^{-1}$, though their formulation is slightly different
than that given above.  Note that approaches targeting ATT for
estimation, such as the one considered by \cite{hazlett2018kernel},
fix all treated weights to $\alpha_i = n_T^{-1}$ and focus only on the
term involving $f_0(\bX_i)$.

As shown in~\citet{kallus2021more}, the problem of minimizing this
worst-case conditional bias amounts to computing weights that balance
the treatment and control covariate distributions with respect to the
empirical distribution for the population of interest.  Let
$\widehat{F}_{\balpha}$ and $\widehat{G}_{\balpha}$ denote the
weighted empirical covariate distributions for the treatment and
control groups, respectively, while having $\widehat{H}_{\bv}$
represent the empirical covariate distribution corresponding to the
population of interest.  Then if $\mathcal{F}_K$ is also restricted to
the unit-ball RKHS (fixing the size of $f_0,f_1$ is necessary since
the bias scales linearly with $\norm{f_0}_{\mathcal{H}_{K_0}}$ and
$\norm{f_1}_{\mathcal{H}_{K_1}}$), i.e., $c=1$, the optimization
problem in Eqn~\eqref{eq:wc_bias_min} can be written in terms of the
minimization of the empirical MMD statistic:
\begin{mini}
	{\balpha}{\gamma_{K_1}^2\left(\widehat{F}_{\balpha}, \widehat{H}_{\bv} \right) + \gamma_{K_0}^2\left(\widehat{G}_{\balpha}, \widehat{H}_{\bv}\right)}{\label{eq:wc_bias_mmd}}{}
	\addConstraint{\balpha}{\in \cA.}{}
\end{mini}
This objective does not contain a measure of distance between the
conditional covariate distributions $\widehat{F}_{\balpha}$ and
$\widehat{G}_{\balpha}$. Instead, balance between these two
distributions is indirectly encouraged through balancing each one
individually with respect to the target distribution
$\widehat{H}_{\bv}$.  This is in contrast with SVM, which directly
balances the covariate distribution between the treatment and control
groups.

\section{Simulations}
\label{sec:sims}

In this section, we examine the performance of SVM in ATE estimation
under two different simulation settings.  We also examine the
connection between SVM and the QIP for the largest balanced subset.

\subsection{Setup}
\label{subsec:setup}

We consider two simulation setups used in previous studies.
Simulation~A comes from \cite{lee2010improving} who use a slightly
modified version of the simulations presented in
\cite{setoguchi2008evaluating}.  We adopt the exact setup
corresponding to their ``scenario G,'' which is briefly summarized
here.  We refer readers to the original article for the exact
specification.  For each simulated dataset, we generate 10 covariates
$\bX_i = (X_{i1}, \dots, X_{i10})^\top$ from the standard normal
distribution, with correlation introduced between four pairs of
variables. Treatment assignment is generated according to
$P(T_i = 1 \mid \bX_i) = \text{expit}\left(\bbeta^\top
  f(\bX_i)\right)$, where $\bbeta$ is some coefficient vector, and
$f(\bX_i)$ controls the degree of additivity and linearity in the true
propensity score model. This scenario uses the true propensity score
model with a moderate amount of non-linearity and non-additivity. The
outcome model was specified to be linear in the observed covariates
with a constant, additive treatment effect:
$Y_i(T_i) = \gamma_0 + \bgamma^\top \bX_i + \tau T_i + \epsilon_i$,
with $\tau = -0.4$ and $\epsilon_i \sim \mathcal{N}(0,0.1)$.

Simulation~B comes from \cite{wong2018kernel} and represents a more
difficult scenario where both the propensity score and outcome
regression models are nonlinear in the observed covariates. For each
simulated data set, we generate a ten-dimensional random vector
$\bZ_i = (Z_{i1}, \dots, Z_{i10})^\top$ from the standard normal
distribution. The observed covariates are the nonlinear functions of
these variables, $\bX_i = (X_{i1}, \dots, X_{i10})^\top$, where
$X_1 = \exp(Z_1/2)$, $X_2 = Z_2 / [1+\exp(Z_1)]$,
$X_3 = (Z_1Z_3/25 + 0.6)^3$, $X_4 = (Z_2 + Z_4 + 20)^2$, and
$X_j = Z_j,\, j=5, \dots, 10$. Treatment assignment follows
$P(T_i = 1 \mid \bZ_i) = \text{expit}(-Z_1 - 0.1Z_4)$, which
corresponds to Model~1 of \cite{wong2018kernel}.  Finally, the outcome
model is specified as
$Y(T_i) = 200 + 10T_i + (1.5T_i - 0.5)(27.4Z_1 + 13.7Z_2 + 13.7Z_3 +
13.7Z_4) + \epsilon_i$, with $\epsilon_i \sim \mathcal{N}(0,1)$. Note
that the true PATE is $10$ under this model.

\subsection{Comparison between SVM and SVM-QIP}
\label{subsec:valid}

We begin by examining the connection between SVM and SVM-QIP by
comparing solutions obtained using one simulated dataset of $N=500$
units under Simulation~A. Specifically, we first compute the SVM path
using the path algorithm described in Section~\ref{subsec:path_alg},
obtaining a set of regularization parameter breakpoints
$\lambda$. Next, we compute the SVM-QIP solution for each of these
breakpoints using the Gurobi optimization software \citep{gurobi}. We
limit the solver to spending 5 minutes of runtime for each problem.
Finding the \emph{exact} integer-valued solution under a given
$\lambda$ requires a significant amount of time, but a good
approximation can typically be found in a few seconds.

For both methods, we compute the objective function value at each of
the breakpoints as well as the coverage of the SVM-QIP solution by the
SVM solution.  The latter represents the proportion of units with
non-zero SVM weights that are included in the largest balanced subset
identified by SVM-QIP.  Formally, the coverage is defined as
\begin{equation*}
	\text{cvg}(\lambda) \ = \ \frac{\left|\lceil\balpha_{\text{SVM}}(\lambda)\rceil \cap \balpha_{\text{SVM-QIP}}(\lambda)\right|}{|\balpha_{\text{SVM-QIP}}(\lambda)|}.
\end{equation*}
To examine the effects of separability on the quality of the
approximation, we perform the above analysis using three different
types of features. Specifically, we use a linear kernel with the
untransformed covariates (linear), a linear kernel with the degree-2
polynomial features formed by concatenating the original covariates
with all two-way interactions and squared terms (polynomial), and the
Gaussian RBF with scale parameter chosen according the median
heuristic (RBF). In all cases, we scale the input feature matrix such
that the columns have $0$ mean and standard deviation $1$ before
performing the kernel computation.

\begin{figure}[t!]
	\centering
	\includegraphics[width=\textwidth]{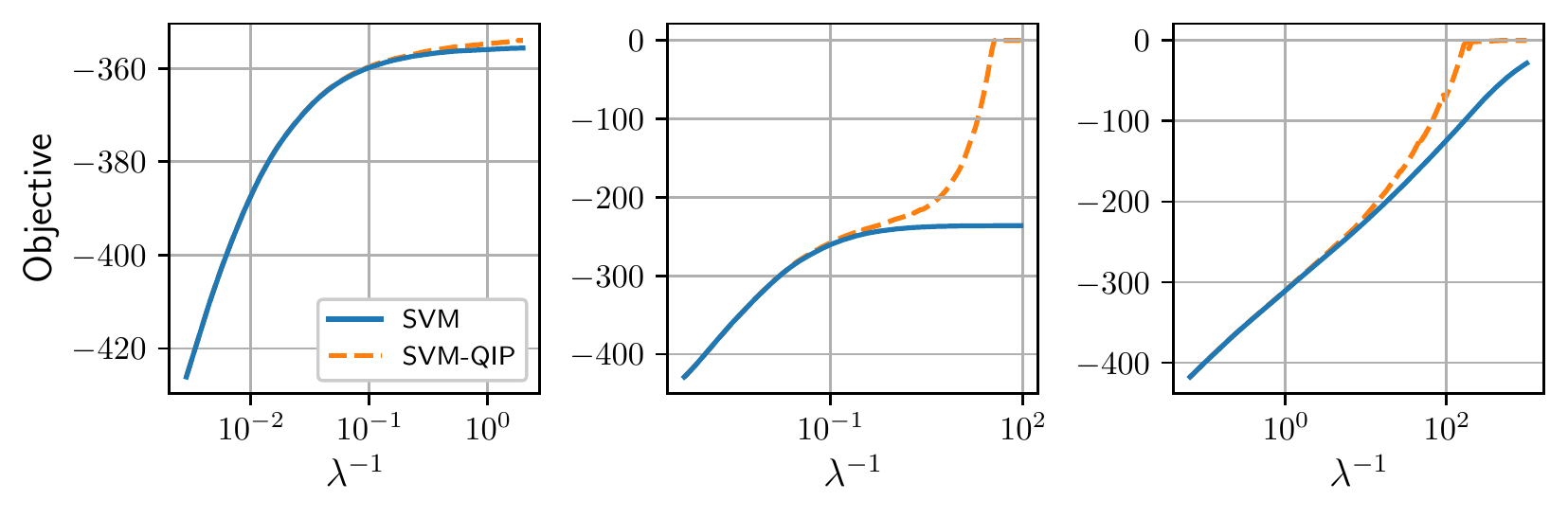}
	\begin{flushleft}
		\vspace{-0.5cm}
		\ifprint
		\else
		\begin{tabular}{C{1.285cm}  C{3.33cm}  C{0.85cm}  C{3.33cm}  C{0.95cm}   C{3.33cm} }
			& \small(a) Linear & & \small (b) Polynomial & & \small (c) RBF
		\end{tabular}
		\fi
		\vspace{-0.4cm}
	\end{flushleft}
	\caption[Comparison of the objective value between SVM and SVM-QIP]{
		Comparison of the objective value between SVM and
		SVM-QIP. The blue line denotes the objective value of the SVM
		solution, and the orange dotted line denotes the objective value of the
		SVM-QIP solution.}
	\label{fig:val_obj}
\end{figure}

Figure~\ref{fig:val_obj} shows that the objective values for the SVM
and SVM-QIP solutions are close when the penalty on balance
$\lambda^{-1}$ is small, with divergence between the two methods
occurring towards the end of the regularization path. In the linear
case, we see that the paths for the two methods are nearly identical,
suggesting that their solutions are essentially the same. Divergence
in the polynomial and RBF settings is more pronounced due to greater
separability in the transformed covariate space, which is more
difficult to balance without non-integer weights. When $\lambda$ is
very small, we also find that SVM-QIP returns $\balpha = \bzero$,
indicating that the penalty on balance is too great. Lastly, the
effects of approximating the SVM-QIP solution are reflected in the RBF
setting, where upon close inspection the objective value appears to be
somewhat noisy and non-monotonic.

\begin{figure}[t!]
	\centering
	\includegraphics[width=\textwidth]{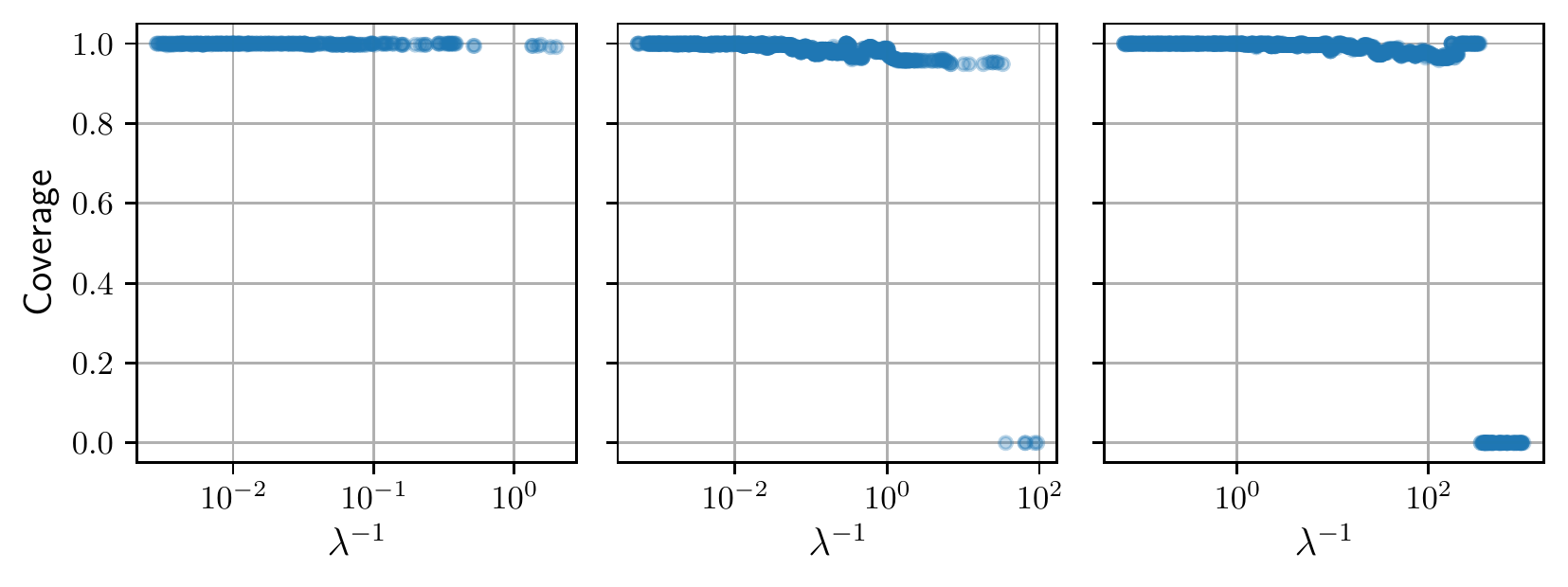}
	\begin{flushleft}
		\vspace{-0.5cm}
		\ifprint
		\else
		\begin{tabular}{C{0.95cm}  C{4.025cm}  C{4.95cm}  C{4.025cm} }
			& \small(a) Linear & \small (b) Polynomial & \small (c) RBF
		\end{tabular}
		\fi
		\vspace{-0.4cm}
	\end{flushleft}
	\caption[Proportion of samples in the SVM-QIP largest
	balanced subset covered by the SVM solution]{
		The proportion of samples in the SVM-QIP largest
		balanced subset covered by the SVM solution.  The instances
		of zero coverage in the polynomial and RBF settings
		represent the cases where the SVM-QIP fails to find a
		nontrivial solution.}
	\label{fig:val_cov}
\end{figure}

Interestingly, the coverage plots in Figure~\ref{fig:val_cov} show
that even when the objective values of the two methods are divergent,
the SVM solution still predominantly covers the SVM-QIP solution.  The
regions with zero coverage in the polynomial and RBF settings
correspond to instances where the balance penalty is so significant
that a nontrivial solution cannot be found for the SVM-QIP. This
result illustrates that SVM approximates one-to-one matching by
augmenting a well-balanced matched subsample with some non-integer
weights.  This leads to an increased subset size while preserving the
overall balance within the subsample.

\subsection{Performance of SVM}
\label{subsec:sim_results}

Next, we evaluate the performance of SVM in estimating the ATE for
Simulations~A~and~B. For each scenario, we generate 1,000 datasets
with $N=500$ samples.  For each simulated dataset, we compute the ATE
estimate over a fixed grid of $100$ $\lambda$ values chosen based on
the simulation scenario and input feature.  As described in
Section~\ref{subsec:valid}, we use the linear, polynomial, and
RBF-induced features, standardizing the covariate matrix before
passing it to the kernel in all cases.

\begin{figure}[t!]
	\centering
	\includegraphics[width=\textwidth]{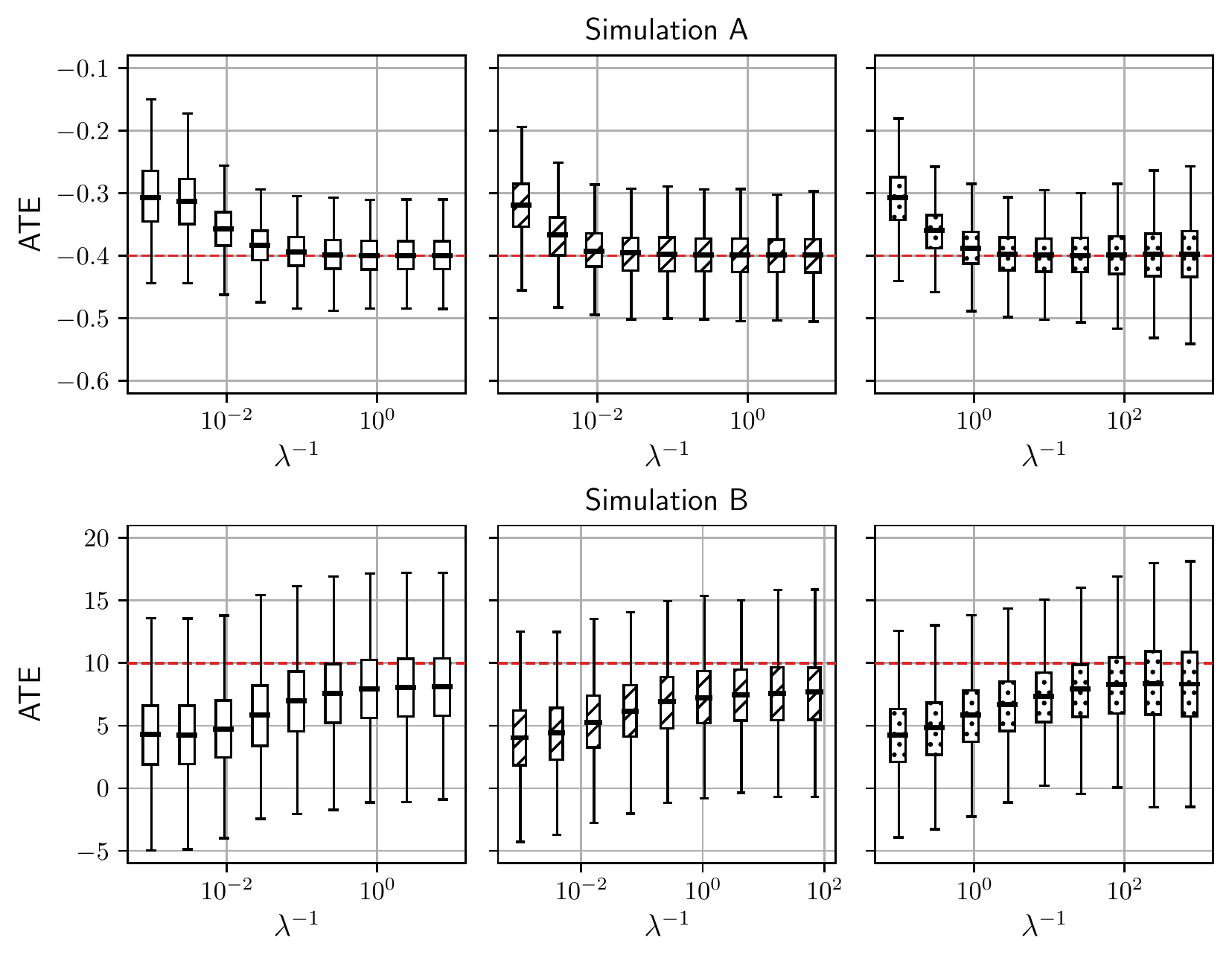}
	\begin{flushleft}
		\vspace{-0.5cm}
		\ifprint
		\else
		\begin{tabular}{C{1.2cm}  C{3.95cm}  C{4.85cm}  C{3.95cm} }
			& \small(a) Linear & \small (b) Polynomial & \small (c) RBF
		\end{tabular}
		\fi
		\vspace{-0.4cm}
	\end{flushleft}
	\caption[ATE estimates for Simulations~A and B over the SVM regularization path]{
		ATE estimates for Simulations~A (top) and B (bottom)
		over the SVM regularization path. The boxplots represent the
		distribution of the ATE estimates over Monte Carlo
		simulations. The red dashed line corresponds to the true
		ATE.}
	\label{fig:path_sims}
\end{figure}

Figure~\ref{fig:path_sims} plots the distribution of ATE estimates
over Monte Carlo simulations against the regularization parameter
$\lambda$.  The results for Simulation~A (top panel) show that the
bias approaches zero as the penalty on balance increases ($\lambda$
decreases).  This is because the conditional bias in the estimate
under the outcome model for Simulation~A is given by,
\begin{equation*}
	\E[\widehat{\tau} - \tau \mid X_{1:N}, T_{1:N}] \ = \ \bgamma^\top \left(\sum_{i=1}^{N}\alpha_i T_i \bX_i\right).
\end{equation*}
This implies that all bias comes from prognostic score imbalance.
This quantity becomes the smallest when minimizing
$\norm{\sum_{i=1}^{N}\alpha_i T_i \bX_i}$, which is controlled by the
regularization parameter $\lambda$ in the SVM dual objective under the
linear setting.  Note that this quantity is also small under 
both the polynomial and RBF input features.

We also find that under all three settings, there is relatively little
change in the variance of the estimates along most of the path,
suggesting that the variance gained from trimming the sample is
counteracted by the variance decreased from correcting for
heteroscedasticity. The exception to this observation occurs at the
beginning of the linear case, where the reduction in bias also reduces
the variance, and at the end of the RBF path, where the amount of
trimming is so substantial relative to the balance gained that the
variance increases.

For Simulation~B (bottom panel), the bias decreases as the penalty on
balance increases.  However, due to misspecification, nonlinearity,
and treatment effect heterogeneity in the outcome model, the bias
never decays to zero as shown in Section~\ref{subsec:effect_est}. We
also find that the SVM with linear kernel can reduce bias as well as
the other kernels, suggesting that SVM is robust to misspecification
and nonlinearity in the outcome model.  Similar to Simulation~A, we
observe relatively small changes in the variance as the constraint on
balance increases, except at the end of the RBF path where there is
substantial sample pruning.

\subsection{Comparison with Other Methods}

Next, we compare the performance of SVM with that of other methods.
Our results below show that the performance of SVM is comparable to
that of related state-of-the-art covariate balancing methods available
in the literature.  In particular, we consider kernel optimal matching
\citep[KOM;][]{kallus2021more}, kernel covariate balancing
\citep[KCB;][]{wong2018kernel}, cardinality matching
\citep[CARD;][]{zubizarreta2014matching}, and inverse propensity score
weighting (IPW) based on logistic regression (GLM) and random forest
(RFRST), both of which were used in the original simulation study by
\cite{lee2010improving}. For SVM, we compute solutions using
$\lambda^{-1} = 0.42$, $\lambda^{-1} = 0.10$, and
$\lambda^{-1} = 2.60$ for Simulation A under the linear, polynomial,
and RBF settings, respectively.  For Simulation B, we use
$\lambda^{-1} = 1.07$, $\lambda^{-1} = 1.92$, and
$\lambda^{-1} = 10.48$. These values are taken from the grid of
$\lambda$ values used in the simulation based on visual inspection of
the path plots in Figure~\ref{fig:path_sims} around where the estimate
curve flattens out.

For KOM, we compute weights under the linear, polynomial, and RBF
settings described earlier with the default settings for the provided
code. For KCB, we compute weights using the RBF kernel and use its
default settings.  While KCB allows for other kernel functions, it was
originally designed for the use of RBF and Sobolev kernels.  We find
its results to be poor when using the linear and polynomial features.
For CARD, we used a threshold of $0.01$ and $0.1$ times the
standardized difference-in-means for the linear- and
polynomial-induced features, respectively, and we set the search time
for the algorithm to 3 minutes. For GLM and RFRST, we used the
linear-induced features and the default algorithm settings described
in \cite{lee2010improving}.



\begin{figure}[t!]
	\includegraphics[width=\textwidth]{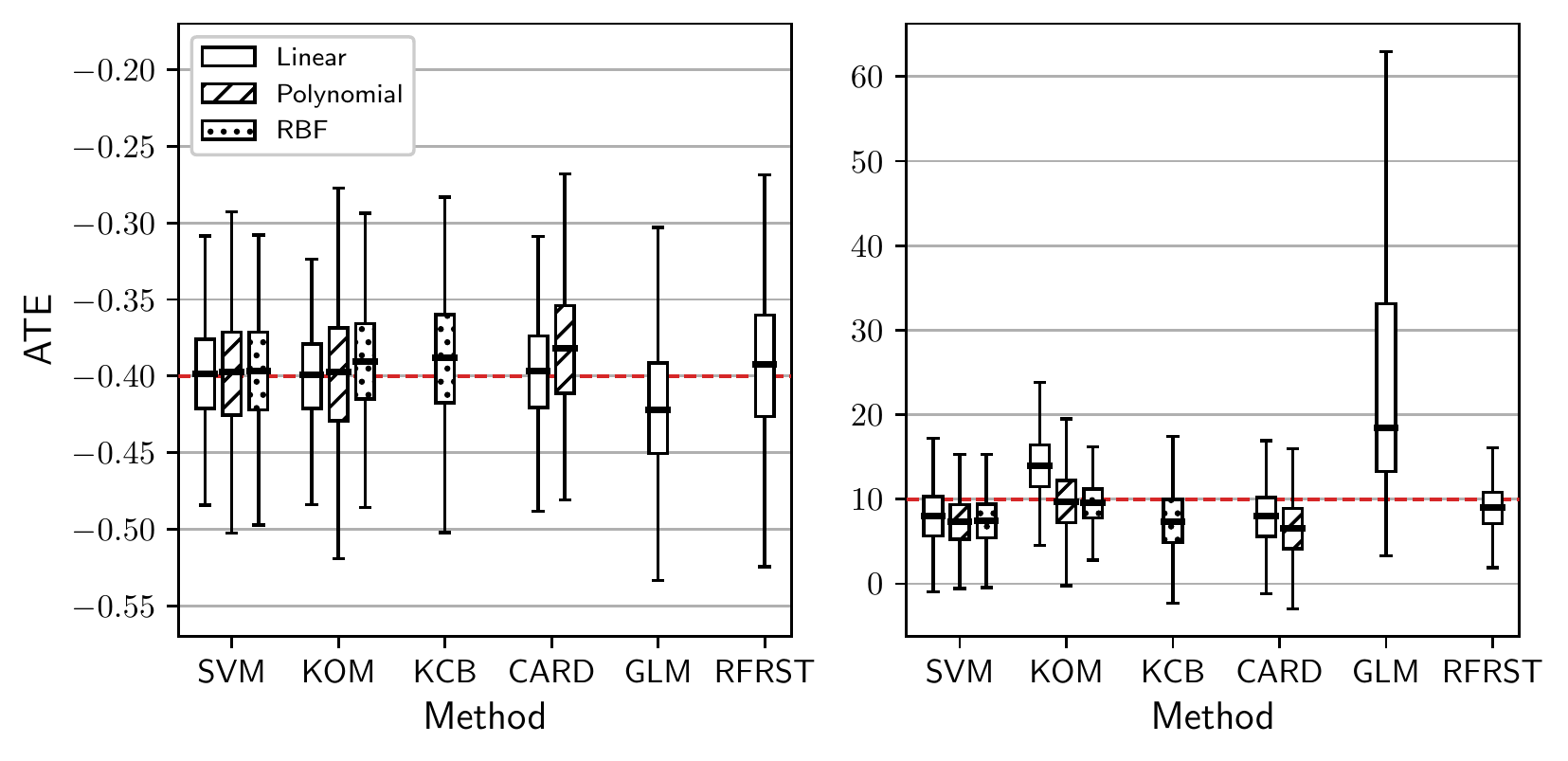}
	\begin{flushleft}
		\vspace{-0.5cm}
		\ifprint
		\else
		\begin{tabular}{C{1.39cm}  C{5.8cm}  C{0.7cm}  C{5.8cm} }
			& \small (a) Simulation A & & \small (b) Simulation B
		\end{tabular}
		\fi
		\vspace{-0.4cm}
	\end{flushleft}
	\caption[Boxplots for ATE estimates for
	Simulations~A and B]{
		Boxplots for ATE estimates for
		Simulations~A~(left)~and~B (right).  The hatch pattern
		denotes the input feature (Linear, Polynomial, or RBF) ---
		kernel optimal matching \citep[KOM;][]{kallus2021more},
		kernel covariate balancing \citep[KCB;][]{wong2018kernel},
		cardinality matching
		\citep[CARD;][]{zubizarreta2014matching}, and IPW with
		propensity score modeling via logistic regression (GLM) and random
		forest (RFRST). The red dashed line corresponds to the true
		ATE.}
	\label{fig:compare_sims}
\end{figure}

Figure~\ref{fig:compare_sims} plots the distributions of the effect
estimates over 1,000 simulated datasets for both scenarios.
Simulation~A (left panel) shows comparable performance across all
methods, with SVM and KOM having the best performance in terms of both
bias and variance. In particular, SVM achieves near zero bias under
all three input features. The results for KCB show that it performs
slightly worse in comparison to the other kernel methods, with greater
bias and variance under the RBF setting.

The results for CARD show near identical performance with SVM under
the linear setting, however results under the polynomial setting are
notably worse. The reason for this comes from the choice of balance
threshold, which was set to 0.1 times the standardized
difference-in-means of the input feature matrix. Although decreasing
the scalar below 0.1 would lead to a more balanced matching, we found
that algorithm was unable to consistently find a solution for all
datasets with scalar multiples smaller than 0.1. This result
highlights the main issue with defining balance
dimension-by-dimension, which makes it difficult to enforce small
overall balance without information on the underlying geometry of the
data. Lastly, the propensity score methods show the worst
performance. This is somewhat expected as the true propensity score
model is more complicated than the true outcome model under this
simulation setting.

We note that further reduction in the variance of the SVM solution
while preserving bias is likely possible with a more principled method
of choosing the solution for each simulated dataset. In general, a
value of $\lambda$ that works well for one dataset may not work for
another.  A better approach would examine estimates over the path and
balance-sample size curves for each dataset
individually. Nevertheless, our heuristic procedure to selecting a
solution produced high-quality results.

The results for Simulation~B (right panel) show a slightly more
varying performance across methods. Amongst the kernel methods, we
find that KOM has the best performance under the polynomial and RBF
settings, achieving near zero bias under these scenarios, while SVM
has the best performance under the linear setting. The discrepancy
under the linear setting is due to misspecification, which leads to a
poor regularization parameter choice and consequently poor balance and
bias under the KOM procedure.

We also find that SVM is unable to drive the bias to zero, which is
due to the treatment effect heterogeneity in the outcome model. As
discussed in Section~\ref{subsec:effect_est}, SVM ignores the second
term in the conditional bias decomposition in
Eqn~\eqref{eq:bias_cate}, which is zero under a constant additive
treatment effect in Simulation A but is nonzero in Simulation B. In
contrast, KOM targets both bias terms in its formulation, which leads
to greater bias reduction.

In comparison to the other kernel methods, KCB has comparable bias to
SVM but greater variance. For CARD, we observe comparable results to
SVM under the linear setting, but worse performance under the
polynomial setting due to the reasons mentioned above. Lastly, we find
mixed results between the two propensity score methods. Logistic
regression (GLM) has the worst performance while Random forest (RFRST)
exhibits the second best performance. This result is likely due to the
simple structure of the true propensity score model, whose
nonlinearity can only be accurately modeled by RFRST.

\section{Empirical Application: Right Heart Catheterization Study}
\label{sec:rhc}

In this section, we apply the proposed methodology to the right heart
catheterization (RHC) data set originally analyzed in
\cite{connors1996effectiveness}. This observational data set was used
to study the effectiveness of right heart catheterization, a
diagnostic procedure, for critically ill patients. The key result from
the study was that after adjusting for a large number of pre-treatment
covariates, right heart catheterization appeared to reduce survival
rates.  This finding contradicts the existing medical perception that
the procedure is beneficial.

\subsection{Data and Methods}
\label{subsec:rhc_setup}

The data set consists of 5,735 patients, with 2,184 of them assigned
to the treatment group and 3,551 assigned to the control group. For
each patient, we observe the treatment status, which indicates whether
or not he/she received catheterization within 24 hours of hospital
admission.  The outcome variable represents death within 30 days.
Finally, the dataset contains a total of 72 pre-treatment covariates
that are thought to be related to the decision to perform right heart
catheterization. These variables include background information about the
patient, such as age, sex, and race, indicator variables for primary/secondary
diseases and comorbidities, and various measurements from medical test
results.  

We compute the full SVM regularization paths under the linear,
polynomial, and RBF settings described in
Section~\ref{subsec:setup}. For the polynomial features, we exclude
all trivial interactions (e.g., interactions between categories of the
same categorical variable) and squares of binary-valued covariates. We
also compute the KOM weights under all three settings, the KCB weights
under the RBF setting, and the CARD weights under the linear and
polynomial settings with a threshold set to 0.1 times the standardized
difference-in-means.

\subsection{Results}
\label{subsec:rhc_results}

\begin{figure}[t!]
	\centering
	\includegraphics[width=\textwidth]{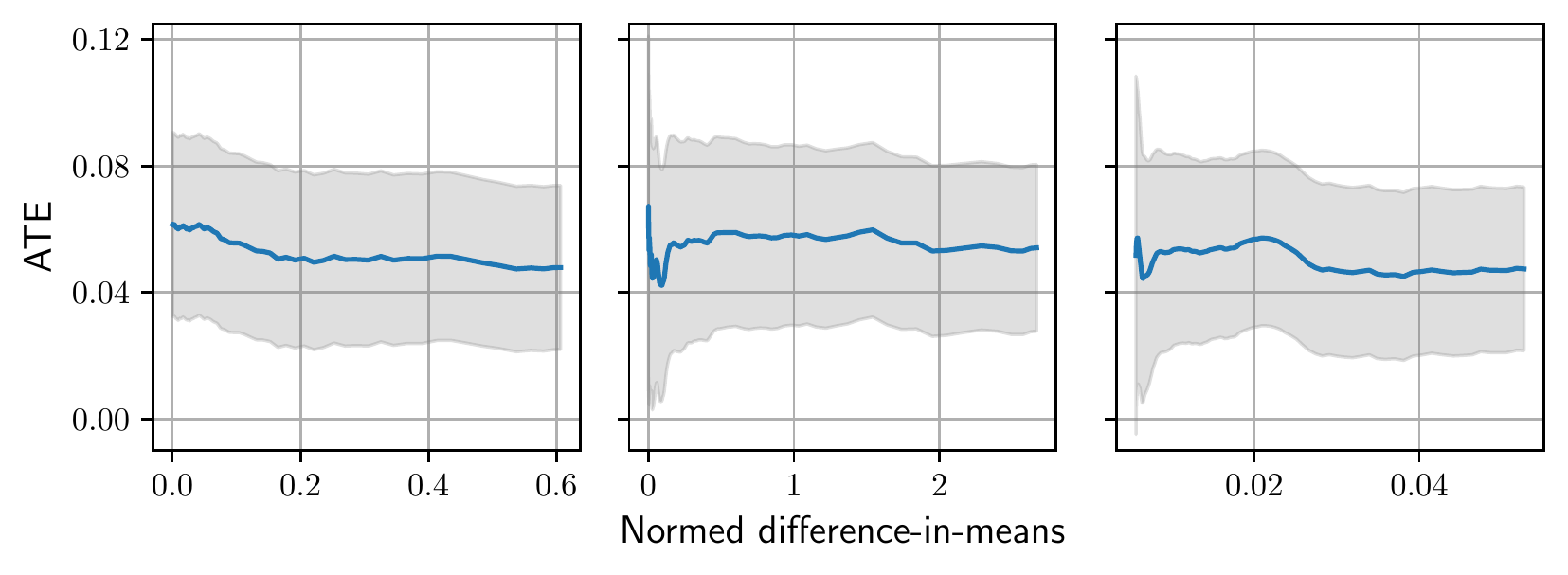}
	\begin{flushleft}
		\vspace{-0.5cm}
		\ifprint
		\else
		\begin{tabular}{C{1.12cm}  C{3.9cm}  C{5cm}  C{3.9cm} }
			& \small(a) Linear & \small (b) Polynomial & \small (c) RBF
		\end{tabular}
		\fi
		\vspace{-0.4cm}
	\end{flushleft}
	\caption[ATE estimates for the RHC data over the SVM
	regularization path]
	{ATE estimates for the RHC data over the SVM
		regularization path. The horizontal axis represents the
		normed difference-in-means in covariates within the weighted
		subset. The solid blue line denotes the average estimate,
		and the solid gray background denotes the pointwise 95\%
		confidence intervals.}
	\label{fig:path_rhc}
\end{figure}

Figure~\ref{fig:path_rhc} plots the ATE estimates over the SVM
regularization paths with the pointwise 95\% confidence intervals
based on the weighted Neyman variance estimator \citep[Chapter~19]{imbens2015causal}.  The horizontal axis
represents the normed difference-in-means within the weighted subset
as a covariate balance measure.  For all three settings, we find that
the estimated ATE slightly increases as the weighted subset becomes
more balanced, supporting the results originally reported in
\cite{connors1996effectiveness} that right heart catheterization
decreased survival rates.

\begin{figure}[t!]
	\centering
	\includegraphics[width=\textwidth]{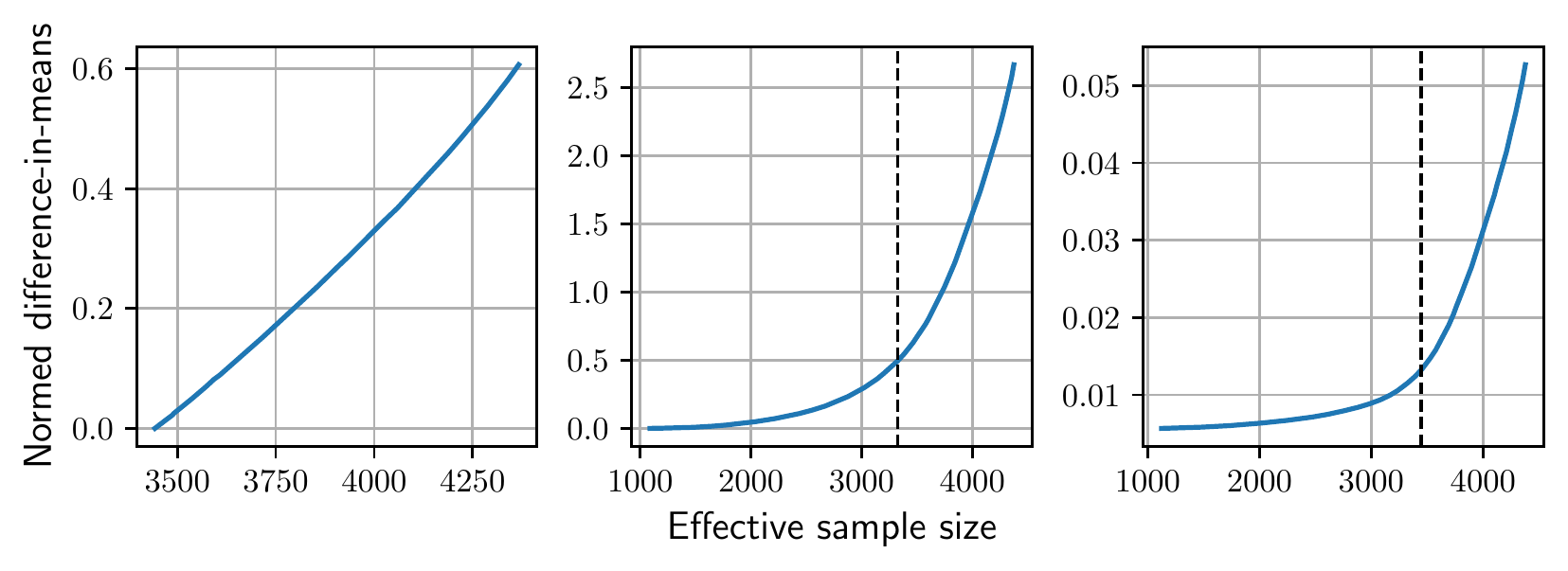}
	\begin{flushleft}
		\vspace{-0.5cm}
		\ifprint
		\else
		\begin{tabular}{C{0.97cm}  C{3.65cm}  C{0.5cm}  C{3.65cm}  C{0.65cm}  C{3.65cm} }
			& \small(a) Linear & & \small (b) Polynomial & & \small (c) RBF
		\end{tabular}
		\fi
		\vspace{-0.4cm}
	\end{flushleft}
	\caption[Trade-off between balance and effective sample
	size]{
		Trade-off between balance and effective sample
		size. The black dashed-line indicates the estimated elbow
		point.}
	\label{fig:ndim_ss}
\end{figure}

Figure~\ref{fig:ndim_ss} illustrates the trade-off between the balance
measure (the normed difference-in-means in covariates within the
weighted subset) and effective subset size, as the balance-sample size
frontier. Such graphs can be useful to researchers in selecting a
solution along the regularization path for estimating the ATE.  Across
all cases, we achieve a good amount of balance improvement once the
data set is pruned to about 3,500, which occurs around where the
trade-off between subset size balance becomes less favorable. 

We also examine differences in dimension-by-dimension balance between
SVM and CARD and between SVM and KOM under the linear and polynomial
settings.  We do not conduct such a comparison for RBF, which is
infinite dimensional. Here, we consider four different SVM solutions:
the largest subset size solution whose standardized
difference-in-means in covariates was below 0.1 for all dimensions,
the solution whose effective sample size was nearest the subset size
for the other method, the solution whose normed difference-in-means in
covariates was closest to that of the other method, and the solution
occurring at the kneedle estimate for the elbow of the balance-weight
sum curve. We take the minimum-balance solution when no elbow exists,
as in the linear case.  The effective sample size is computed
according to the following Kish's formula:
\begin{equation}
	N_e \ = \ \frac{\left(\sum_{i\in \cT} \alpha_i\right)^2}{\sum_{i\in \cT} \alpha_i^2} + \frac{\left(\sum_{i\in \cC} \alpha_i\right)^2}{\sum_{i\in \cC} \alpha_i^2}.
\end{equation}

\begin{figure}[t!]
	\centering
	\includegraphics[width=\textwidth]{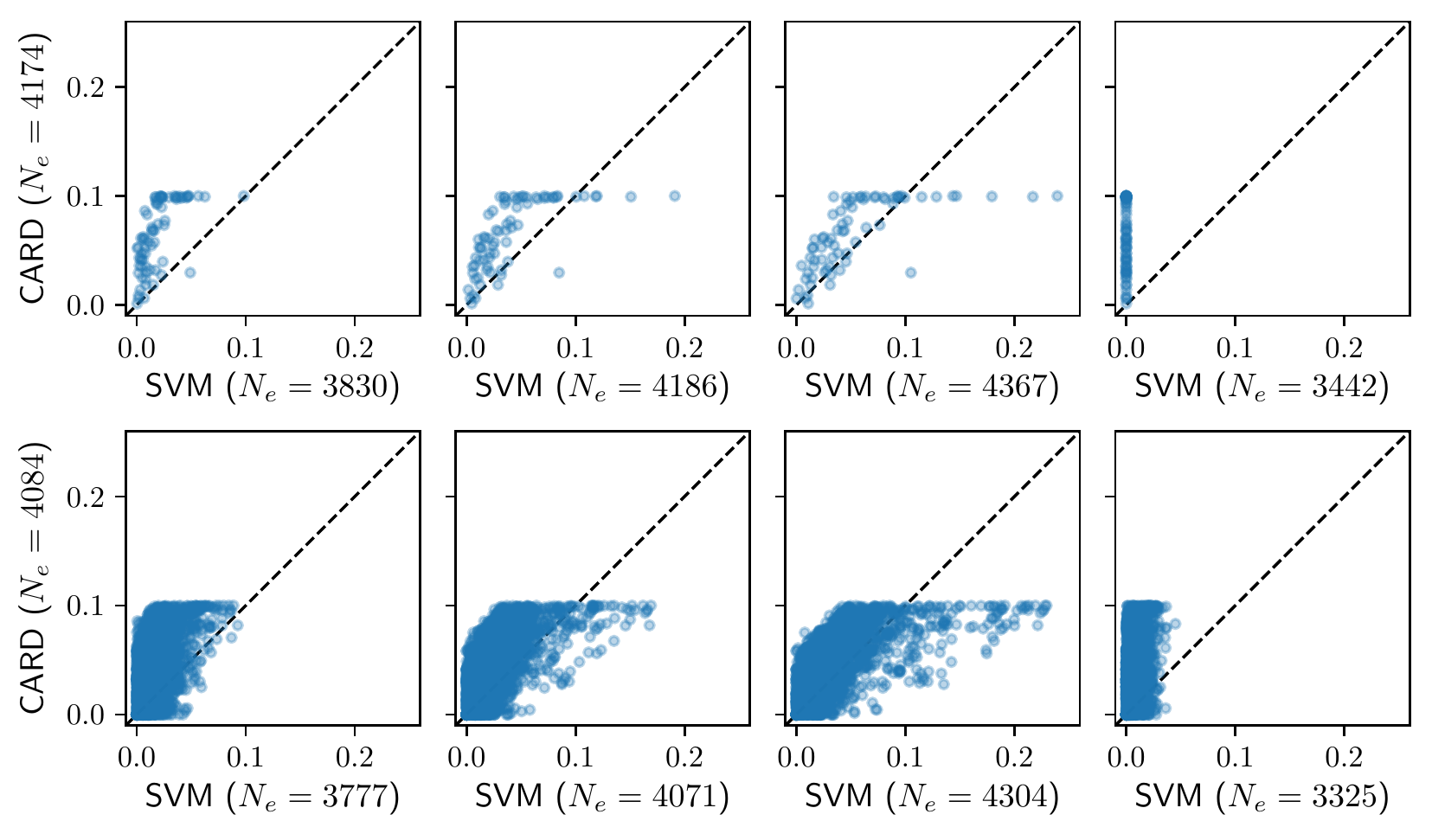}
	\begin{flushleft}
		\vspace{-0.5cm}
		\ifprint
		\else
		\begin{tabular}{C{0.75cm}  C{3.3cm}  C{3.3cm}  C{3.3cm}  C{3cm} }
			& \parbox[t]{3.2cm}{\small (a) Small difference- \\ \hphantom{(a)} in-means} & \parbox[t]{3.2cm}{\small (b) Closest effective \\ \hphantom{(b)} sample size} & \parbox[t]{3.2cm}{\small (c) Closest normed \\ \hphantom{(c)} difference-in-\\ \hphantom{(c)} means}	 & (d) Elbow
		\end{tabular}
		\fi
		\vspace{-0.4cm}
	\end{flushleft}
	\caption[Comparison of covariate standardized difference-in-means
	between SVM and cardinality matching]{
		Comparison of covariate standardized difference-in-means
		between SVM and cardinality matching (CARD) under the linear (top)
		and polynomial (bottom) settings with different SVM solutions: (a)
		standardized difference-in-means less than 0.1 in all covariates, (b) effective
		sample size closest to that of CARD, (c) normed
		difference-in-means closest to that of CARD, and (d) elbow of the
		regularization path.  The effective sample size for each method is
		given as $N_e$ in the parentheses. Note that darker areas correspond to higher concentrations of points.}
	\label{fig:sdim_card}
\end{figure} 

Figure~\ref{fig:sdim_card} presents the covariate balance comparisons
between SVM and CARD for both linear and polynomial
settings. Comparing against the small difference-in-means solution
(leftmost column) for which the standardized difference-in-means for
all covariates are below 0.1, CARD retains more observations in its
selected subset, but SVM achieves a better covariate balance than CARD
for most dimensions although there are some large imbalances.  This is
expected because SVM minimizes the overall covariate imbalance without
a constraint on each dimension as in CARD.  We observe a similar
result when comparing CARD with the SVM solutions based on the closest
effective sample size solution (left-middle) and the closest normed
difference-in-means solution (right-middle).  It is notable that the
latter generally achieves a better covariate balance while retaining
more observations than CARD.  Finally, the results for the elbow SVM
solution (rightmost column) show that tight balance is attainable with
a moderate amount of sample pruning, with near exact balance in the
linear setting.  This level of covariate balance is difficult to
achieve with CARD due to the infeasibility of optimization
particularly in high dimensional settings.

\begin{figure}[t!]
	\centering
	\includegraphics[width=\textwidth]{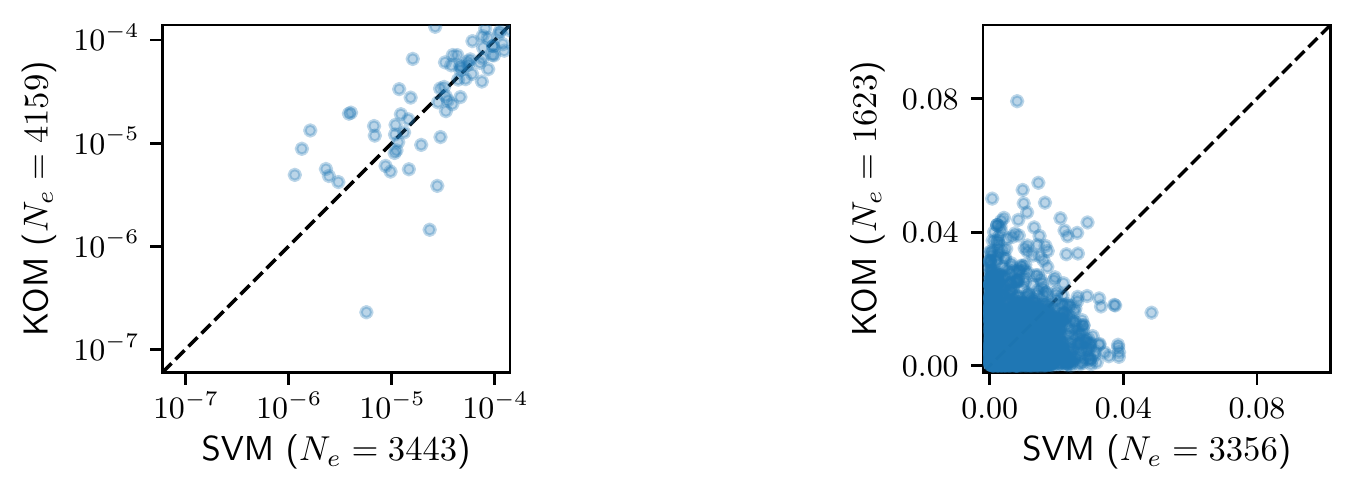}
	\begin{flushleft}
		\vspace{-0.5cm}
		\ifprint
		\else
		\begin{tabular}{C{1.5cm}  C{3.6cm}  C{5.15cm}  C{3.6cm} }
			& \small(a) Linear & & \small (b) Polynomial
		\end{tabular}
		\fi
		\vspace{-0.4cm}
	\end{flushleft}
	\caption[Comparison of covariate standardized difference-in-means
	between SVM and kernel optimal matching]
	{Comparison of covariate standardized difference-in-means between
		SVM and kernel optimal matching (KOM) under the linear
		(left) and polynomial (right) settings.}
	\label{fig:sdim_kom}
\end{figure}

Figure~\ref{fig:sdim_kom} shows the dimensional balance comparisons
for the KOM solution against the SVM solution. Here, we see that under
the linear setting, KOM retains significantly more units than the SVM
solution while attaining the same balance. This is due to the
$\sum_i \alpha W_i = 0$ constraint of SVM, which encourages the
selected subset to have a roughly equal proportion of treated and
control units, while the KOM solution allows the resulting subset to
be more imbalanced. Under the polynomial setting, however, we find
that SVM retains significantly more units than KOM while achieving a
similar degree of covariate balance, which is likely due to poor
regularization parameter choice by the KOM algorithm.

\begin{table}[t]
	\centering
	\begin{tabular}{llccc}
		\toprule
		Feature & Method & Estimate & Standard error & Effective sample size \\
		\midrule
		\multirow{5}{*}{Linear} &
		$\text{SVM}_{\text{balance}}$ & 0.0615 & 0.0148 & 3442  \\
		& $\text{SVM}_{\text{elbow}}$ & ---    & ---    & --- \\
		& $\text{SVM}_{\text{imbalance}}$  & 0.0479 & 0.0132 & 4367 \\			
		& CARD                       & 0.0335 & 0.0138 & 4174 \\
		& KOM                        & 0.0656 & 0.0147 & 4159 \\
		\midrule
		\multirow{5}{*}{Polynomial} & 
		$\text{SVM}_{\text{balance}}$ & 0.0634 & 0.0279 & 1087 \\
		& $\text{SVM}_{\text{elbow}}$ & 0.0588 & 0.0154 & 3325 \\
		& $\text{SVM}_{\text{imbalance}}$             & 0.0541 & 0.0135 & 4375 \\
		& CARD                       & 0.0313 & 0.0139 & 4084 \\
		& KOM                        & 0.0452 & 0.0251 & 1623 \\
		\midrule
		\multirow{5}{*}{RBF}    &   $\text{SVM}_{\text{balance}}$ & 0.0518 & 0.0289 & 1123 \\
		& $\text{SVM}_{\text{elbow}}$  & 0.0527 & 0.0148 & 3444 \\
		& $\text{SVM}_{\text{imbalance}}$ & 0.0474 & 0.0132 & 4378 \\
		& KCB                        & 0.0337 & 0.0173 & 3306 \\
		& KOM                        & 0.0582 & 0.0166 & 4185 \\
		\bottomrule
	\end{tabular}
	\caption[The estimated effect of right heart catheterization on death
	within 30 days after treatment]{
		The estimated effect of right heart catheterization on death
		within 30 days after treatment.  We compare the results
		based on cardinality matching (CARD) and Kernel Optimal
		Matching (KOM) with those based on three SVM solutions --
		the solution with the best covariate balance on the
		regularization path in terms of normed difference-in-means
		($\text{SVM}_{\text{balance}}$), the elbow solution
		($\text{SVM}_{\text{elbow}}$), and the solution with the
		worst covariate balance on the path
		($\text{SVM}_{\text{imbalance}}$).}
	\label{tab:rhc_results}
\end{table}

Lastly, we compare the point estimates of the ATE, the weighted Neyman
standard error, and the effective sample size for SVM, CARD (with
linear and polynomial), KOM, and KCB (with RBF) in
Table~\ref{tab:rhc_results}. We considered three different solutions
from the SVM path: $\text{SVM}_{\text{imbalance}}$, which corresponds
to the initial solution for which the balance constraint is most
relaxed and $\alpha_i = 1$, $i \in \cT$,
$\text{SVM}_{\text{balance}}$, which corresponds to the most
regularized solution with the best covariate balance on the path, and
$\text{SVM}_{\text{elbow}}$, which corresponds to the solution
occurring at the elbow of the balance-weight sum curves shown in
Figure~\ref{fig:ndim_ss}.

The results show that SVM leads to a positive estimate in all cases,
which agrees with the original finding reported in
\cite{connors1996effectiveness}. We also find that the three SVM
solutions differ most significantly in their standard errors, which
increases as the constraint on balance becomes stronger and the subset
is more pruned, as shown in the effective sample size column. In particular, 
the heavily balanced $\text{SVM}_{\text{balance}}$ solution under the RBF 
setting leads to a 
95\% confidence interval which overlaps with zero. This is in contrast with
the less balanced $\text{SVM}_{\text{elbow}}$ solution, which has both a 
larger effect estimate and smaller confidence interval. This result 
demonstrates the necessity of computing the regularization path so that 
researchers may avoid low-quality solutions due to poor parameter choice.  

Comparing against other methods, we observe that KOM yields greater
estimates of positive effects with comparable standard errors in the
linear and RBF settings. However, under the polynomial setting, the
standard error is much larger and the sample is significantly more
pruned than the modestly balanced $\text{SVM}_{\text{elbow}}$
solution.  Both CARD and KCB produce smaller positive effect
estimates, with the standard error for KCB leading to a 95\%
confidence interval which overlaps with zero.

\section{Concluding Remarks}

In this paper, we show how support vector machines (SVMs) can be used
to compute covariate balancing weights and estimate causal effects. We
establish a number of interpretations of SVM as a covariate balancing
procedure. First, the SVM dual problem computes weights which minimize
the MMD while simultaneously maximizing effective sample size. Second,
the SVM dual problem can be viewed as a continuous relaxation of the
largest balanced subset problem. Lastly, similar to existing kernel
balancing methods, SVM weights minimize the worst-case bias due to
prognostic score imbalance. Additionally, path algorithms can be used
to compute the entire set of SVM solutions as the regularization
parameter varies, which constitutes a balance-sample size
frontier. 

Our work suggests several possible directions for future research. On
the algorithmic side, a disadvantage of the proposed methodology is
that it encourages roughly equal effective number of treated and
control units in the optimal subset, which can lead to unnecessary
sample pruning.  One could use weighted SVM \citep{lin2002fuzzy} to
address this problem, but existing path algorithms are applicable only
to unweighted SVM.  On the theoretical side, our results suggest a
fundamental connection between the support vectors and the set of
overlap. \citet{steinwart2004sparseness} shows that the fraction of
support vectors for a variant of the SVM discussed here asymptotically
approaches the measure of this overlap set, suggesting that SVM may be
used to develop a statistical test for the overlap assumption.

\spacingset{1.8}
\pdfbookmark[1]{References}{References}
\bibliography{Bib,my,imai}

\newpage
\pdfbookmark[1]{Supplementary Appendix}{Supplementary Appendix}
\spacingset{1.25}
\begin{center}
  \Large Supplementary Appendix for ``Estimating Average Treatment
  Effects with Support Vector Machines''
\end{center}
 
\begin{appendix}
	\section{Proof of Theorem~\ref{thm:svm_mmd_equiv}}
	\label{app:thm:svm_mmd_equiv}
	
	We prove this theorem by establishing several equivalent
	reformulations of the SVM dual problem. By equivalence, we mean that
	these problems differ from one another only in the scaling of the
	regularization parameter, implying that their regularization paths
	consist of the same set of solutions. More formally, given two
	optimization problems P1 and P2, we say that P1 and P2 are equivalent
	if a solution $\balpha_1^*$ for P1 can be used to construct a solution
	for P2.
	
	Denote the SVM weight set
	\begin{equation*}
		\label{eq:svm_weights}
		\mathcal{A}_{\text{SVM}} = \left\{\balpha \in \R^N : 0 \preceq \balpha.
		\preceq 1, ~\sum_{i \in \mathcal{T}} \alpha_i = \sum_{j \in
			\mathcal{C}} \alpha_j\right\},
	\end{equation*}
	and consider a rescaled version of the SVM dual given in
	Eqn~\eqref{eq:svm_dual}, which we label as P1:
	\begin{equation}
		\label{eq:svm_dual_resc}
		\begin{aligned}
			\min_{\balpha} &&& \balpha^\top\bQ\balpha -  \mu \bone^\top\balpha. & \\
			\text{s.t.} &&& \balpha \in \mathcal{A}_{\text{SVM}} &
		\end{aligned}
		\tag{P1}
	\end{equation}
	Note that for a given $\lambda$ and solution $\balpha_{*}$ to the
	original problem defined in Eqn~\eqref{eq:svm_dual} under $\lambda$,
	$\balpha_{*}$ is also a solution to the rescaled problem
	\ref{eq:svm_dual_resc} under $\mu = 2\lambda$.  This establishes
	equivalence between these two problems.
	
	We begin by proving the following lemma, which allows us to replace
	the squared seminorm term $\balpha^\top \bQ \balpha$ with
	$\sqrt{\balpha^\top \bQ \balpha}$ to obtain the problem
	\begin{equation}
		\label{eq:svm_dual_sqrt}
		\begin{aligned}
			\min_{\balpha} &&& \sqrt{\balpha^\top\bQ\balpha} -  \nu \bone^\top\balpha. & \\
			\text{s.t.} &&& \balpha \in \mathcal{A}_{\text{SVM}} &
		\end{aligned}
		\tag{P2}
	\end{equation}

	\begin{lemma}
		\label{lem:sqrt_equiv}
		The problems given in Eqn~\eqref{eq:svm_dual_resc} and
                Eqn~\eqref{eq:svm_dual_sqrt} are equivalent.
	\end{lemma}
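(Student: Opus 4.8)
The plan is to exploit the fact that $t\mapsto t^{2}$ is a strictly increasing bijection of $[0,\infty)$: replacing the squared seminorm $\balpha^{\top}\bQ\balpha$ by $\sqrt{\balpha^{\top}\bQ\balpha}$ changes neither the convexity of the problem nor the covariate-imbalance/subset-size trade-off it encodes, but only the units in which imbalance is measured, and this can be absorbed into a rescaling of the penalty. I would use throughout that $\bQ\succeq 0$ (it is the Gram matrix of the vectors $W_{i}\phi(\bX_{i})$), so that, as in Eqn~\eqref{eq:cov_discrep}, $\balpha^{\top}\bQ\balpha=\norm*{\Phi\balpha}^{2}$ with $\Phi\balpha\coloneqq\sum_{i\in\cT}\alpha_{i}\phi(\bX_{i})-\sum_{i\in\cC}\alpha_{i}\phi(\bX_{i})$; hence $\sqrt{\balpha^{\top}\bQ\balpha}=\norm*{\Phi\balpha}$ is a seminorm and the objectives of both \eqref{eq:svm_dual_resc} and \eqref{eq:svm_dual_sqrt} are convex over the polytope $\mathcal{A}_{\text{SVM}}$.

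The main step is to characterize optimality for each problem by a first-order variational inequality and then match the two by scaling the regularization parameter. For a convex objective minimized over the convex set $\mathcal{A}_{\text{SVM}}$, a point $\balpha_{*}$ is optimal if and only if some subgradient $\xi$ of the objective at $\balpha_{*}$ satisfies $\langle\xi,\,\balpha-\balpha_{*}\rangle\ge 0$ for all $\balpha\in\mathcal{A}_{\text{SVM}}$. The objective of \eqref{eq:svm_dual_resc} is smooth, so this reads $\langle\,2\bQ\balpha_{*}-\mu\bone,\ \balpha-\balpha_{*}\,\rangle\ge 0$ on $\mathcal{A}_{\text{SVM}}$; and at any $\balpha_{*}$ with $r\coloneqq\sqrt{\balpha_{*}^{\top}\bQ\balpha_{*}}>0$ the objective of \eqref{eq:svm_dual_sqrt} is differentiable with gradient $r^{-1}\bQ\balpha_{*}-\nu\bone$, so optimality reads $\langle\,r^{-1}\bQ\balpha_{*}-\nu\bone,\ \balpha-\balpha_{*}\,\rangle\ge 0$ on $\mathcal{A}_{\text{SVM}}$. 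If $\balpha_{*}$ solves \eqref{eq:svm_dual_resc} under $\mu>0$ with $r>0$, then setting $\nu=\mu/(2r)$ gives $r^{-1}\bQ\balpha_{*}-\nu\bone=(2r)^{-1}(2\bQ\balpha_{*}-\mu\bone)$, so the first variational inequality, multiplied by the positive constant $(2r)^{-1}$, is exactly the second, and by convexity $\balpha_{*}$ solves \eqref{eq:svm_dual_sqrt} under $\nu$; the converse is the same computation with $\mu=2r\nu$. Thus a solution of \eqref{eq:svm_dual_resc} can be used as a solution of \eqref{eq:svm_dual_sqrt} and vice versa, the regularization parameters being related by the solution-dependent factor $2\sqrt{\balpha_{*}^{\top}\bQ\balpha_{*}}$, so the two regularization paths coincide (apart from the trivial all-discarded weights $\balpha=\bzero$, which is never an SVM solution).

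I expect the only real friction to be the degenerate case $r=0$, where $\sqrt{\balpha^{\top}\bQ\balpha}$ is not differentiable; this can occur at a solution only if $\bQ$ is singular and $\Phi\balpha_{*}=0$ (the weighted groups are exactly balanced). There $\bQ\balpha_{*}=0$, so the variational inequality for \eqref{eq:svm_dual_resc} collapses to $\bone^{\top}\balpha\le\bone^{\top}\balpha_{*}$ on $\mathcal{A}_{\text{SVM}}$, i.e.\ $\balpha_{*}$ has maximal weight sum; since $\Phi\balpha_{*}=0$, the vector $\bzero$ is a subgradient of $\balpha\mapsto\norm*{\Phi\balpha}$ at $\balpha_{*}$, hence $-\nu\bone$ is a subgradient of the \eqref{eq:svm_dual_sqrt} objective at $\balpha_{*}$, and its variational inequality $\langle\,-\nu\bone,\ \balpha-\balpha_{*}\,\rangle\ge 0$ is exactly the bound just derived, so $\balpha_{*}$ solves \eqref{eq:svm_dual_sqrt} for every $\nu>0$, and conversely. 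A cleaner route that avoids the nonsmoothness entirely is to note, via convex duality, that both \eqref{eq:svm_dual_resc} and \eqref{eq:svm_dual_sqrt} are penalized reformulations of the single problem of minimizing subset imbalance subject to a lower bound on $\bone^{\top}\balpha$ --- minimizing $\balpha^{\top}\bQ\balpha$ and minimizing $\sqrt{\balpha^{\top}\bQ\balpha}$ over a common feasible set have the same minimizers --- and to invoke Slater's condition for that constrained problem, which holds except at the maximal weight-sum bound, the boundary case being exactly the initial solution in Eqn~\eqref{eq:path_init}.
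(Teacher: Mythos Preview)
Your argument is correct and takes a genuinely different route from the paper's. The paper passes through the constrained problem \eqref{eq:svm_dual_const}: by strong duality it trades the penalty $\mu\,\bone^{\top}\balpha$ for a hard lower bound $\bone^{\top}\balpha\ge\epsilon$, observes that the constrained minimizer is unchanged under the monotone transformation $t\mapsto\sqrt{t}$ of the objective, and then invokes strong duality again to return to the penalized form \eqref{eq:svm_dual_sqrt}. You instead match the first-order variational inequalities of \eqref{eq:svm_dual_resc} and \eqref{eq:svm_dual_sqrt} directly, showing that a minimizer $\balpha_{*}$ of one is a minimizer of the other once the penalties are linked by $\nu=\mu/(2\sqrt{\balpha_{*}^{\top}\bQ\balpha_{*}})$. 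Your approach has the virtue of producing this explicit correspondence between $\mu$ and $\nu$, which the paper's duality detour does not yield; it also makes transparent that the equivalence is solution-dependent. The price is the separate treatment of the nonsmooth case $\balpha_{*}^{\top}\bQ\balpha_{*}=0$, which you handle carefully via subgradients for the forward direction; the ``conversely'' you assert there is a bit quick, but you immediately note that the constrained-problem route---which is exactly the paper's proof---sidesteps the issue entirely, so nothing is lost.
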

	\begin{proof}
		This result follows from the strong duality of the SVM dual problem,
		which allows us to form the following equivalent problem in which
		the penalized term $\bone^\top \balpha$ is replaced with a hard
		constraint with threshold $\epsilon$:
		\begin{equation}
			\label{eq:svm_dual_const}
			\begin{aligned}
				\min_{\balpha} &&& \balpha^\top\bQ\balpha. & \\
				\text{s.t.} &&& \balpha \in \mathcal{A}_{\text{SVM}} & \\
				&&& \bone^\top\balpha \geq \epsilon
			\end{aligned}
			\tag{P3}
		\end{equation}
		The solution to Eqn~\eqref{eq:svm_dual_const} is unchanged whether we
		minimize $\balpha^\top\bQ\balpha$ or $\sqrt{\balpha^\top\bQ\balpha}$,
		so the problem
		\begin{equation*}
			\begin{aligned}
				\min_{\balpha} &&& \sqrt{\balpha^\top\bQ\balpha} & \\
				\text{s.t.} &&& \balpha \in \mathcal{A}_{\text{SVM}} & \\
				&&& \bone^\top\balpha \geq \epsilon
			\end{aligned}
		\end{equation*}
		is identical to Eqn~\eqref{eq:svm_dual_const}. By strong duality, we can
		again enforce the hard constraint on the term $\bone^\top\balpha$
		through a penalized term with new regularization parameter, which
		establishes the equivalence between Eqn~\eqref{eq:svm_dual_resc} and
		Eqn~\eqref{eq:svm_dual_sqrt}.
	\end{proof}
	
	Next, we consider the fractional program 
	\begin{equation}
		\label{eq:svm_dual_frac}
		\begin{aligned}
			\min_{\balpha} &&& \frac{\sqrt{\balpha^\top\bQ\balpha}}{\bone^\top\balpha / 2}.  & \\
			\text{s.t.} &&& \balpha \in \mathcal{A}_{\text{SVM}} &
		\end{aligned}
		\tag{P4}
	\end{equation}
	The following lemma connects Eqn~\eqref{eq:svm_dual_frac} to
        the reformulated SVM problem given in Eqn~\eqref{eq:svm_dual_sqrt} through Dinkelbach's method \citep{dinkelbach1967nonlinear,schaible1976fractional}:
	\begin{lemma}{\citep[][Theorem~1]{dinkelbach1967nonlinear}}
		\label{lem:dinkelbach}
		Suppose $\balpha_* \in \cA_{\text{SVM}}$ and $\balpha_* \neq \bzero$. Then
		\begin{equation*}
			q_* = \frac{\sqrt{\balpha_*^{\top}\bQ\balpha_*}}{\bone^\top\balpha_*/2} = 
			\min_{\balpha \in \mathcal{A}_{\text{SVM}}} \frac{\sqrt{\balpha^\top\bQ\balpha}}{\bone^\top\balpha / 2} 
		\end{equation*}
		if, and only if
		\begin{equation*}
			\min_{\balpha \in \mathcal{A}_{\text{SVM}}} \sqrt{\balpha^\top\bQ\balpha} - \frac{q_*}{2}\bone^\top\balpha = \sqrt{\balpha_*^\top\bQ\balpha_*} - \frac{q_*}{2}\bone^\top\balpha_* = 0.
		\end{equation*}
	\end{lemma}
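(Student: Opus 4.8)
The plan is to follow the classical argument behind Dinkelbach's method, specialized to the SVM feasible set. Write $g(\balpha) = \sqrt{\balpha^\top \bQ \balpha}$ and $h(\balpha) = \bone^\top \balpha / 2$, so that the parametric problem appearing in the statement is $F(q) := \min_{\balpha \in \cA_{\text{SVM}}} \{ g(\balpha) - q\, h(\balpha) \}$, while the fractional program in Eqn~\eqref{eq:svm_dual_frac} minimizes $g(\balpha)/h(\balpha)$ over nonzero $\balpha \in \cA_{\text{SVM}}$.

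The first step is to observe that $h$ is strictly positive on $\cA_{\text{SVM}} \setminus \{\bzero\}$. Indeed, if $\balpha \in \cA_{\text{SVM}}$ has common group sum $s := \sum_{i \in \cT}\alpha_i = \sum_{j \in \cC}\alpha_j$, then $s = 0$ together with $\balpha \succeq \bzero$ forces $\balpha = \bzero$; hence $\balpha \neq \bzero$ implies $\bone^\top \balpha = 2s > 0$. This is what makes the ratio well-defined, justifies the hypothesis $\balpha_* \neq \bzero$, and guarantees $q_* = g(\balpha_*)/h(\balpha_*)$ is finite and nonnegative.

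Next I would establish the two implications by elementary algebra. For the ``only if'' direction, suppose $q_*$ is the minimum of the ratio, attained at $\balpha_*$; then $g(\balpha) \geq q_* h(\balpha)$ for every nonzero feasible $\balpha$ (divide through by $h(\balpha) > 0$), with equality at $\balpha_*$, and the inequality $g(\bzero) - q_* h(\bzero) = 0$ holds trivially, so $F(q_*) = 0$ with the minimum attained at $\balpha_*$. For the ``if'' direction, suppose $F(q_*) = 0$ with minimizer $\balpha_*$; then $g(\balpha_*) - q_* h(\balpha_*) = 0$ gives $g(\balpha_*)/h(\balpha_*) = q_*$ since $h(\balpha_*) > 0$, and $g(\balpha) - q_* h(\balpha) \geq 0$ for all feasible $\balpha$ gives, after dividing by $h(\balpha) > 0$ for $\balpha \neq \bzero$, that $g(\balpha)/h(\balpha) \geq q_*$; hence $\balpha_*$ minimizes the fractional program with value $q_*$.

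Since the lemma is essentially a restatement of \citet[Theorem~1]{dinkelbach1967nonlinear}, the only real obstacle is a bookkeeping one: the ratio is undefined at $\balpha = \bzero$, but the parametric objective there equals $0$, so adjoining or removing the zero vector from the feasible set affects neither $F(q)$ nor the claim --- and this is precisely what the positivity argument in the first step lets us dispatch cleanly. An alternative would be to invoke Dinkelbach's theorem verbatim after checking its hypotheses (nonempty feasible set and $h > 0$ on the relevant region), but the self-contained computation above is just as short.
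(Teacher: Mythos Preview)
Your proof is correct and follows the classical Dinkelbach argument. The paper itself does not supply a proof of this lemma at all: it simply attributes the result to \citet[Theorem~1]{dinkelbach1967nonlinear} and moves on, so your self-contained derivation is strictly more than what the paper provides. In that sense the ``approaches'' coincide --- both rest on Dinkelbach's theorem --- but you have written out the elementary verification (positivity of $h$ on $\cA_{\text{SVM}}\setminus\{\bzero\}$, then the two implications by multiplying and dividing through by $h(\balpha)$) rather than deferring to the reference. Your handling of the boundary point $\balpha=\bzero$, where the ratio is undefined but the parametric objective vanishes, is exactly the bookkeeping needed to make the citation rigorous in this setting.
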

	
	Thus, the solution to the rescaled SVM dual problem
	defined in Eqn~\eqref{eq:svm_dual_sqrt} under $\nu = q_* / 2$ minimizes the
	fractional program given in Eqn~\eqref{eq:svm_dual_frac}. Finally, we consider the
	MMD minimization problem defined in Eqn~\eqref{eq:mmd_opt_orig},
	\begin{equation}
		\label{eq:mmd_opt_app}
		\begin{aligned}
			\min_{\balpha} &&& \sqrt{\balpha^\top\bQ\balpha}.  & \\
			\text{s.t.} &&& \balpha \in \mathcal{A}_{\text{simplex}} &
		\end{aligned}
		\tag{P5}
	\end{equation}
	The following lemma establishes equivalence between
	Eqn~\eqref{eq:svm_dual_frac} and Eqn~\eqref{eq:mmd_opt_app} under the proper
	renormalization of the fractional program solution.
	\begin{lemma}
		\label{lem:frac_equiv}
		Assume any solution $\balpha_*$ to \eqref{eq:svm_dual_frac} is such that $\balpha_* \neq \bzero.$ Then the problems \eqref{eq:svm_dual_frac} and \eqref{eq:mmd_opt_app} are equivalent.
	\end{lemma}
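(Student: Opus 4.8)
The plan is to exploit the fact that the objective of the fractional program~\eqref{eq:svm_dual_frac} is invariant under positive rescaling of $\balpha$, so that any feasible point of $\cA_{\text{SVM}}$ may be renormalized into $\cA_{\text{simplex}}$ without changing its objective value. First I would observe that for any $\balpha \in \cA_{\text{SVM}}$ with $\balpha \neq \bzero$, the common value $s := \sum_{i \in \cT}\alpha_i = \sum_{j\in\cC}\alpha_j$ is strictly positive: if $s = 0$, nonnegativity forces $\alpha_i = 0$ for every $i$, contradicting $\balpha \neq \bzero$. Hence the renormalized vector $\widetilde\balpha := \balpha / s = 2\balpha/(\bone^\top\balpha)$ is well defined.

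Next I would verify that $\widetilde\balpha \in \cA_{\text{simplex}}$. Nonnegativity and the balancing constraint $\sum_{i\in\cT}\widetilde\alpha_i = \sum_{j\in\cC}\widetilde\alpha_j = 1$ are immediate; the only point needing a short argument is the upper bound $\widetilde\alpha_i \leq 1$, which holds because $\alpha_i \leq \sum_{k\in\cT}\alpha_k = s$ for $i \in \cT$ (all summands being nonnegative), and likewise for $i \in \cC$. Since $\sqrt{(\,\cdot\,)^\top\bQ(\,\cdot\,)}$ is positively homogeneous of degree one and $\bone^\top\balpha$ is linear, the ratio in~\eqref{eq:svm_dual_frac} is unchanged when $\balpha$ is replaced by $\widetilde\balpha$; moreover, for any point of $\cA_{\text{simplex}}$ one has $\bone^\top\balpha = \sum_{i\in\cT}\alpha_i + \sum_{j\in\cC}\alpha_j = 2$, so the objective of~\eqref{eq:svm_dual_frac} restricted to $\cA_{\text{simplex}}$ coincides with the objective $\sqrt{\balpha^\top\bQ\balpha}$ of~\eqref{eq:mmd_opt_app}.

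Combining these, I would conclude equivalence as follows. Because $\cA_{\text{simplex}} \subseteq \cA_{\text{SVM}}$ and the renormalization map sends $\cA_{\text{SVM}}\setminus\{\bzero\}$ onto $\cA_{\text{simplex}}$ while preserving the objective, the infima of~\eqref{eq:svm_dual_frac} and~\eqref{eq:mmd_opt_app} coincide. In particular, if $\balpha_*$ solves~\eqref{eq:svm_dual_frac} (hence $\balpha_* \neq \bzero$ by hypothesis), then $\widetilde\balpha_* = 2\balpha_*/(\bone^\top\balpha_*)$ is feasible for~\eqref{eq:mmd_opt_app} and attains the common optimal value, so it solves~\eqref{eq:mmd_opt_app}; conversely, any solution of~\eqref{eq:mmd_opt_app}, being feasible for~\eqref{eq:svm_dual_frac} and attaining that value, solves~\eqref{eq:svm_dual_frac}. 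This is exactly the notion of equivalence used throughout this appendix. The only real subtlety I anticipate is confirming that renormalization does not violate the box constraint $\balpha \preceq \bone$ --- the place where the structure of $\cA_{\text{SVM}}$ (the balancing equality together with nonnegativity) is genuinely used --- together with the bookkeeping that excludes the degenerate point $\balpha = \bzero$, at which the ratio is ill-defined; the hypothesis that all solutions of~\eqref{eq:svm_dual_frac} are nonzero disposes of the latter.
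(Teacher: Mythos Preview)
Your proposal is correct and follows essentially the same approach as the paper: exploit the positive-homogeneity of the ratio, use the renormalization $\balpha \mapsto 2\balpha/(\bone^\top\balpha)$ to pass between $\cA_{\text{SVM}}\setminus\{\bzero\}$ and $\cA_{\text{simplex}}$, and conclude equality of optimal values via the two inclusions. In fact you are slightly more careful than the paper, which simply asserts that the renormalized optimizer $f(\balpha_4)$ is feasible for \eqref{eq:mmd_opt_app} without checking the upper box constraint; your observation that $\alpha_i \leq \sum_{k\in\cT}\alpha_k = s$ (and similarly on $\cC$) fills that small gap.
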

	\begin{proof}
		Let $\balpha_4 \neq \bzero$ and $\balpha_5$ be
                solutions to problems defined in Eqn~\eqref{eq:svm_dual_frac} and Eqn~\eqref{eq:mmd_opt_app}, respectively, and consider the vector-valued function $f : \cA_{\text{SVM}} \setminus \bzero \mapsto \cA_{\text{simplex}}, ~ f(\balpha) =
		\balpha / (\bone^\top\balpha / 2)$, which normalizes the weights in the treated and control groups to each sum to $1$. First note that since $\cA_{\text{simplex}} \subset \cA_{\text{SVM}} \setminus \bzero$, $\balpha_5$ is feasible for Eqn~\eqref{eq:svm_dual_frac}. Then by optimality of $\balpha_4$, we have
		\begin{equation*}
			\frac{\sqrt{\balpha_4^\top\bQ\balpha_4}}{\bone^\top\balpha_4/2} \leq \frac{\sqrt{\balpha_5^\top\bQ\balpha_5}}{\bone^\top\balpha_5/2} = \sqrt{\balpha_5^\top\bQ\balpha_5}.
		\end{equation*}
		Next, note that $f(\balpha_4)$ is feasible for Eqn~\eqref{eq:mmd_opt_app}. Then by optimality of $\balpha_5$, we have
		\begin{equation*}
			\sqrt{\balpha_5^\top\bQ\balpha_5} \leq \sqrt{f(\balpha_4)^\top\bQ f(\balpha_4)} = \frac{\sqrt{\balpha_4^\top\bQ\balpha_4}}{\bone^\top\balpha_4/2}.
		\end{equation*}
		In order for both of these inequalities to be true, we must have
		\begin{equation*}
			\frac{\sqrt{\balpha_4^\top\bQ\balpha_4}}{\bone^\top\balpha_4/2} = \frac{\sqrt{\balpha_5^\top\bQ\balpha_5}}{\bone^\top\balpha_5/2} = 		\sqrt{\balpha_5^\top\bQ\balpha_5}.
		\end{equation*}
	\end{proof}
	
	Note that the assumption $\balpha \neq \bzero$ in Lemma \ref{lem:dinkelbach} and Lemma \ref{lem:frac_equiv} holds when $\nu \geq q_* / 2$, provided that the data does not consist of samples all belonging to the same class. To see this, note that when $\nu = q_* / 2$, both $\balpha = \bzero$ and any scaled version of the MMD-minimizing solution, $\balpha = c\balpha_*, c > 0$, lead to an objective value of 0 in Eqn\eqref{eq:svm_dual_sqrt}. Then since Eqn~\eqref{eq:svm_dual_sqrt} is a strictly monotonically decreasing function of $\nu$, we must have $\balpha \neq \bzero$ for $\nu > q_* / 2$. Now since the regularization path defined in this work only considers $\nu \geq q_* / 2$, we may safely assume $\balpha \neq \bzero$.
	
	We are now ready to prove Theorem~\ref{thm:svm_mmd_equiv}.  Part (i):
	Lemma~\ref{lem:sqrt_equiv} establishes equivalence between the
	regularization paths for the rescaled SVM dual
	defined in Eqn~\eqref{eq:svm_dual_resc} and Eqn~\eqref{eq:svm_dual_sqrt}.  In addition,
	Lemma~\ref{lem:dinkelbach} establishes the existence of $\nu_*$ such
	that the solution to Eqn~\eqref{eq:svm_dual_sqrt} under $\nu_*$ is also a
	solution to Eqn~\eqref{eq:svm_dual_frac}. Then, it follows that there
	exists $\lambda_*$ such that the solution to the rescaled SVM dual
	problem under $\lambda_*$ minimizes Eqn~\eqref{eq:svm_dual_frac}. Finally,
	recall that Lemma~\ref{lem:frac_equiv} establishes that the minimizing
	solution to Eqn~\eqref{eq:svm_dual_frac} is also a solution to the
	weighted MMD minimization problem.  Therefore, there exists
	$\lambda_*$ such that the solution to the SVM dual under $\lambda_*$
	minimizes the weighted MMD.  Part (ii): The proof follows from
	\citet[Lemma 3]{schaible1976fractional}.

\section{Relation to Cardinality Matching and Stable Balancing Weights}
\label{app:othermethods}

Cardinality matching is an optimization procedure that maximizes the 
number of matches subject to a set of covariate balance constraints.  
The optimization problem for cardinality matching is given by,
\begin{mini}[2]
	{m_{ij}}{\sum_{i \in \mathcal{T}} \sum_{j \in \mathcal{C}} m_{ij}}{\label{eq:cardinality}}{}
	\addConstraint{\abs*{\sum_{i \in \mathcal{T}} \sum_{j \in \mathcal{C}} m_{ij} [f_b(X_{id}) - f_b(X_{jd})]}}{\leq \varepsilon_{db} \sum_{i \in \mathcal{T}} \sum_{j \in \mathcal{C}} m_{ij}, \quad}{d = 1,\dots, D, ~~ b = 1,\dots, B}
	\addConstraint{\sum_{j \in \mathcal{C}} m_{ij}}{\leq 1,}{i \in \mathcal{T}}
	\addConstraint{\sum_{i \in \mathcal{T}} m_{ij}}{\leq 1,}{j \in \mathcal{C}}
	\addConstraint{m_{ij}}{\in \{0,1\},}{i \in \mathcal{T}, ~~ j \in \mathcal{C},}
\end{mini}
where $m_{ij}$ are selection variables indicating whether treated unit
$i$ is matched to control unit $j$, $X_{id}$ denotes the $d$th element
of covariate vector $\bX_i$, $f_b$ is an arbitrary function of the
covariates specifying each of the $B$ balance conditions, and
$\varepsilon_{db}$ is a tolerance selected by a researcher. Common
choices for $f_b$ are the first- and second-order moments, and
$\varepsilon_{db}$ is typically set to a scalar multiple of the
corresponding standardized difference-in-means.

To establish the connection between SVM and cardinality matching,
we first note that cardinality matching need not be formulated as a
matched pair optimization problem.  In fact, the balance constraints
between pairs, as formulated in Eqn~\eqref{eq:cardinality}, are
equivalent to those between the treatment and control groups in the
selected subsample.  Similarly, the one-to-one matching constraints
are equivalent to restricting the number of treated and control units
in the selected subsample to be equal. Therefore, defining the
indicator variable $\alpha_i$ for selection into the optimal subset,
we can rewrite the optimization problem for cardinality matching as,
\begin{mini}[2]
	{\balpha}{\frac{1}{2}\sum_{i = 1}^N \alpha_i }{}{}
	\addConstraint{\abs*{\sum_{i \in \mathcal{T}} \alpha_i f_b(X_{id}) - \sum_{j \in \mathcal{C}} \alpha_j f_b(X_{jd})}}{\leq \frac{1}{2} \varepsilon_{db} \sum_{i = 1}^N \alpha_i, \quad}{d = 1,\dots, D, ~~ b = 1,\dots, B}
	\addConstraint{\sum_{i=0}^{N} \alpha_i W_i}{= 0,}{}
	\addConstraint{\alpha_i}{\in \{0,1\},}{i = 1, \dots, N.}
\end{mini}

Comparing this problem to the SVM dual defined in Eqn~\eqref{eq:svm_dual}, we see two
differences. First, cardinality matching restricts $\alpha_i$ to be
integer-valued, while SVM allows for $\alpha_i$ to be continuous.
Second, balance in the optimal subset is enforced differently in each
method. Cardinality matching imposes covariate-specific balance by
bounding each dimension's difference-in-means, while SVM imposes
aggregated balance by penalizing the normed difference-in-means. The
preference between these two measures of balance may in part depend on
the dimensionality of the covariates and a priori knowledge about
confounding mechanisms. If we suspect certain covariates to be strong
confounders, then bounding those specific dimensions may be
reasonable. However, if no such information is available and the
covariate space is high-dimensional, then restricting the overall
balance may be preferable.


Closely related to cardinality matching are stable balancing weights
(SBW), which is a weighting method that aims to minimize the
dispersion of the weights subject to a set of balance conditions. The
optimization problem for SBW is given by
\begin{mini}[2]
	{\balpha}{\norm{\balpha}^2}{}{}
	\addConstraint{\abs*{\sum_{i \in \mathcal{T}} \alpha_i f_b(X_{id}) - \sum_{j \in \mathcal{C}} \alpha_j f_b(X_{jd})}}{\leq \varepsilon_{db}, \quad}{d = 1,\dots, D, ~~ b = 1,\dots, B}
	\addConstraint{\balpha \in \cA_{\text{simplex}},}{}{}
\end{mini}
where $f_b(\cdot)$ and $\varepsilon_{db}$ denote the same quantities 
defined in cardinality matching.

In order to connect SBW to SVM, we consider L2-SVM, a variant of SVM 
obtained by replacing $\xi_i$ with $\xi_i^2$ in the objective function of
the SVM primal problem given in
Eqn~\eqref{eq:svm_primal}. L2-SVM has a corresponding dual form
\begin{mini}[2]
	{\balpha}{\left(\frac{1}{2}\balpha^\top\bQ\balpha - \bone^\top\balpha\right) + \frac{\lambda}{2}\norm{\balpha}^2}{\label{eq:l2svm_dual}}{}
	\addConstraint{\bW^\top\balpha}{=0}{}
	\addConstraint{\balpha}{\succeq 0.}{}
\end{mini}
Here, the term contained in parentheses relates to the MMD, while the
second term relates to the dispersion of the renormalized weights $\balpha/(\bone^\top\balpha/2)$. As before, the main difference between
L2-SVM and SBW lies in the way balance is enforced, with the former encouraging
aggregate balance through normed difference-in-means and the latter
using dimension-specific balance through a constraint.

	\section{Conditional Bias with Respect to SATE and SATT}
	\label{app:bias}
	
	In this section, we derive the conditional bias for the weighted difference-in-means estimator. Note that our derivation follows the one given in \cite{kallus2018more}. Consider the problem of estimating the SATE and SATT, defined as
	\begin{equation*}
		\tau_{\text{SATE}} = \frac{1}{N} \sum_{i=1}^N Y_i(1) - Y_i(0) ~\text{and}~ \tau_{\text{SATT}} = \frac{1}{n_T} \sum_{i \in \mathcal{T}} Y_i(1) - Y_i(0),
	\end{equation*}
	respectively. We denote the weighted estimator $\widehat{\tau}$, which has a form
	\begin{equation*}
		\widehat{\tau} = \sum_{i \in \mathcal{T}} \alpha_i Y_i - \sum_{i \in \mathcal{C}} \alpha_i Y_i,
	\end{equation*}
	where $\alpha_i \in \cA_{\text{simplex}}$. The conditional bias with respect to the SATE is given by
	\begin{align*}
		\E[\widehat{\tau} - \tau_{\text{SATE}} \mid &\bX_{1:N}, T_{1:N}] \\
		& = \sum_{i \in \mathcal{T}} \E[\alpha_i Y_i \mid \bX_{1:N}, T_{1:N} ] - \sum_{i \in \mathcal{C}} \E[\alpha_i Y_i \mid \bX_{1:N}, T_{1:N}] - \E[\tau_{\text{SATE}} \mid \bX_{1:N}, T_{1:N}] \\
		&= \sum_{i = 1}^N \alpha_i [T_i - (1 - T_i)]\E[ Y_i(T_i) \mid \bX_i, T_i ] - \frac{1}{N}\sum_{i=1}^{N} \E[Y_i(1) - Y_i(0) \mid \bX_i, T_i] \\
		&= \sum_{i = 1}^N \alpha_i [T_i - (1 - T_i)]\E[ Y_i(T_i) \mid \bX_i] - \frac{1}{N}\sum_{i=1}^{N} \E[Y_i(1) - Y_i(0) \mid \bX_i] \\
		&= \sum_{i = 1}^N \alpha_i T_i f_1(\bX_i) - \sum_{i = 1}^N \alpha_i (1 - T_i) f_0(\bX_i) - \frac{1}{N}\sum_{i=1}^{N} \tau(\bX_i) \\
		&= \sum_{i = 1}^N \alpha_i T_i [f_0(\bX_i) + \tau(\bX_i)] - \sum_{i = 1}^N \alpha_i (1 - T_i) f_0(\bX_i) - \frac{1}{N}\sum_{i=1}^{N} \tau(\bX_i) \\
		&= \sum_{i = 1}^N (\alpha_i T_i - N^{-1}) \tau(\bX_i) + \sum_{i=1}^{N} \alpha_i[T_i - (1-T_i)] f_0(\bX_i) \\
		&= \sum_{i = 1}^N (\alpha_i T_i - N^{-1}) \tau(\bX_i) + \sum_{i=1}^{N} \alpha_iW_i f_0(\bX_i),
	\end{align*}
	where the second equality follows from SUTVA, and the third equality follows from
	Assumptions~\ref{assn:uncon}~and~\ref{assn:over}. By a similar argument, 
	the conditional bias with respect to the SATT is
	\begin{align*}
		\E[\widehat{\tau} - \tau_{\text{SATT}} \mid &\bX_{1:N}, T_{1:N}] \\
		&= \sum_{i \in \mathcal{T}} \E[\alpha_i Y_i \mid \bX_{1:N}, T_{1:N} ] - \sum_{i \in \mathcal{C}} \E[\alpha_i Y_i \mid \bX_{1:N}, T_{1:N}] - \E[\tau_{\text{SATT}} \mid \bX_{1:N}, T_{1:N}] \\
		&= \sum_{i \in \mathcal{T}}^N (\alpha_i - n_T^{-1}) \tau(\bX_i) + \sum_{i=1}^{N} \alpha_iW_i f_0(\bX_i).
	\end{align*}
	
	\section{Worst-case Bias in an RKHS}
	\label{app:worst-case}
	
	We consider the problem of minimizing the bias due to prognostic score imbalance, defined in Eqn~\eqref{eq:bias_cate}. Restricting $f_0$ to the unit-ball RKHS, defined as $\mathcal{F}_K$ in Section~\ref{subsec:mmd}, and considering the $f_0$ which maximizes the absolute value of this quantity, we compute the worst-case squared bias due to prognostic score imbalance as
	\begin{equation*}
		B^2(\balpha; ~\bX_{1:N}, T_{1:N}) = 
		\sup_{f_0 \in \mathcal{F}_K} \left(\sum_{i=1}^{N} \alpha_i W_i f_0(\bX_i) \right)^2.
	\end{equation*}
	We can simplify this expression by
	\begin{align*}
		\label{eq:worst_bias}
		B^2(\balpha; \bX_{1:N},T_{1:N}) &= 
		\sup_{f_0 \in \mathcal{F}_K} \left(\sum_{i=1}^{N} \alpha_i W_i f_0(\bX_i) \right)^2 \\
		&= \sup_{f_0 \in \mathcal{F}_K} \left(\sum_{i=1}^{N} \alpha_i W_i \ip*{f_0}{\phi(\bX_i)} \right)^2 \\
		&= \sup_{f_0 \in \mathcal{F}_K} \left(\ip*{f_0}{\sum_{i=1}^{N} \alpha_i W_i\phi(\bX_i)} \right)^2 \\
		&= \norm{\textstyle\sum_{i=1}^{N} \alpha_i W_i\phi(\bX_i)}_{\mathcal{H}_K}^2, \\
		&= \gamma_K^2 \left(\widehat{F}_{\balpha}, \widehat{G}_{\balpha} \right),
	\end{align*}
	where the second line follows from the reproducing property of
        the RKHS, the third line follows from bilinearity of inner
        products, and the fourth line follows from the Cauchy-Schwarz
        inequality.

\end{appendix}

\end{document}